\date{}
\newtheorem{theorem}{Theorem}
\newtheorem{lemma}[theorem]{Lemma}
\newtheorem{definition}[theorem]{Definition}
\newtheorem{corollary}[theorem]{Corollary}
\newcommand{\out}[1]{}
\newcommand{\cB}{\mathcal{B}}
\newcommand{\cC}{\mathcal{C}}
 \title{\bf NC Algorithms for Weighted Planar Perfect Matching and Related Problems\thanks{I would like to thank anonymous reviewers for helpful comments about
previous version of this paper.}}
\author{
Piotr Sankowski\\Institute of Informatics\\
University of Warsaw\\ {\tt sank@mimuw.edu.pl}}
\begin{document}

\maketitle

\begin{abstract}
Consider a planar graph $G=(V,E)$ with polynomially bounded edge weight function
$w:E\to [0, \poly(n)]$. The main results of this paper are NC algorithms for the following problems:
\begin{itemize}
\item minimum weight perfect matching in $G$,
\item maximum cardinality and maximum weight matching in $G$ when $G$ is bipartite,
\item maximum multiple-source multiple-sink flow in $G$ where $c:E\to [1, \poly(n)]$ is a polynomially bounded edge capacity function,
\item minimum weight $f$-factor in $G$ where $f:V\to [1, \poly(n)]$,
\item min-cost flow in $G$ where $c:E\to [1, \poly(n)]$ is a polynomially bounded edge capacity function
and $b:V\to [1, \poly(n)]$ is a polynomially bounded vertex demand function.
\end{itemize}
There have been no known NC algorithms for any of these problems previously.\footnote{Before this and independent paper by Anari and Vazirani~\cite{1709.07822}.} In order
to solve these problems we develop a new relatively simple but versatile framework that is combinatorial
in spirit. It handles the combinatorial structure of matchings directly and needs to
only know weights of appropriately defined matchings from algebraic subroutines.
\end{abstract}

\section{Introduction}
In this paper we study deterministic parallel algorithms for the maximum planar matching problem. In particular, we concentrate our attention
on the NC class where we are given polynomially many processors which need to solve the problem in polylogarithmic time. The fundamental
algorithmic challenge is to compute the maximum cardinality matching, or maximum weight matching, or minimum weight perfect matching.

So far in the case of NC algorithms for planar graphs there seemed to be no tools to attack the most general case and we were only
able to find perfect matchings only in planar bipartite graphs~\cite{MILLER86,Mahajan:2000}. This might be bit surprising as in non-planar graph,
perfect matchings are essentially as powerful
as maximum cardinality matchings, see e.g.,~\cite{Sankowski:2007:FDM:1283383.1283397} for a reduction. Such reductions either make two copies of the graph, where
each vertex is connected with its copy, or add a set of vertices that are connected to every vertex in the graph. There exists no similar planarity preserving reduction,
and so usefulness of algorithms for planar perfect matchings is limited. Hence, in order to find a maximum cardinality matching we need to
retool to less efficient polynomial time algorithms~\cite{Gabow:1988:ASA:62212.62263,21935,Goldberg:1989:IMP:1398514.1398715} that are limited to bipartite case only.
Or, alternatively, we can find a $2$-approximate solution~\cite{Hanckowiak:1998:DCC:314613.314705}. The lack of such planarity preserving reduction is
one of the reasons why computing maximum cardinality matchings in almost linear time was recognized as one of the
core problems in planar graph algorithms~\cite{6108162}. This despite the fact that almost linear
time algorithms for the perfect matching problem existed before~\cite{miller-95,FAKCHAROENPHOL2006868}.

Formally speaking, our results are as follows. Consider a planar graph $G=(V,E)$ with polynomially bounded edge weight (cost) function
$w:E\to [0, \poly(n)]$. A {\em matching} in $G$ is an independent set of edges, whereas a {\em perfect matching} is matching
that is incident to every vertex in $G$. We start by giving a relatively simple NC algorithm for finding a perfect matching in $G$.
Then we extend this algorithm to finding {\em minimum perfect matching}, i.e.,
a perfect matching $M$ of $G$ with minimum total weight $w(M)$. Our algorithm is combinatorial in spirit as it does not manipulate
a fractional matching, but only requires to know weights of appropriately defined minimum perfect matchings. These weights can be computed
using Kasteleyn's Pfaffian orientation of planar graphs and standard algebraic techniques. The algorithm is based
on the fundamental notion of balanced duals that was developed in~\cite{CGSa} as well as~\cite{Gabow:2012}. This idea allows
to define and construct a dual solution in a unique way. In particular, a very simple NC algorithm (see Appendix~\ref{algorithm-blossoms}) for constructing a family of blossoms
of this dual follows by a direct application of the algorithm given in~\cite{CGSa}.
\footnote{Our framework
was developed independently from~\cite{1709.07822} and posted on arXiv on the same day~\cite{1709.07869}. Nevertheless, the author's initial write-up
was very "crude" and contained some gaps. In particular, Algorithm~1 was described in a confusing way without stating which parts should happen in parallel.
Moreover, we aimed to give, an alternative to~\cite{CGSa}, construction
of matching duals that was incorrect. This, however, is a known result, so in this paper we just cite~\cite{CGSa} instead. Finally, the
presentation of this paper was greatly improved with respect to~\cite{1709.07869}, while keeping the original framework. Moreover,
consequences for more general problems have been added. }

The next problem we consider is the {\em minimum weight $f$-factor problem}, where for a given $f:V\to [1, \poly(n)]$
we need to find a set of edges $F$ of minimum total cost $w(F)$ such that $\deg_F(v) =f(v)$ for all $v\in V$. Typically,
this problem is reduced to the perfect matching problem via vertex splitting~\cite{hal83,anstee}. These reductions, however, do not
preserve planarity. Our contribution, is to show a new planarity preserving reduction, that allows to solve the $f$-factor problem
in NC.

The following implication of this result was rather surprising (at least for the author). We show that both the maximum cardinality
and maximum weight bipartite planar matching
problem can be efficiently reduced to the minimum non-bipartite planar matching problem. Thus implying the first known NC algorithm
for finding maximum weight matching and maximum cardinality matching in bipartite planar graphs. We note that prior to this result there have
been no tools or ideas indicating that this problem could have an solution in NC. One reason for this is the applicability of Kasteleyn's result to
the perfect matching problem only. Pfaffian orientations fail for maximum size matchings. Moreover, our reduction preserves the size of the graph, thus
any further development for weighted perfect matching in planar graphs will imply similar results for  maximum weight bipartite matchings.
There seem to be no easy way to extend this result to non-bipartite case, and we note that finding NC algorithms for this problem remains open.
Still, we report partial progress on this problem by showing the first $o(n)$ time PRAM algorithm.

Finally, we consider directed flow networks where capacity of edges is given by a polynomially bounded capacity function  $c:E\to [1, \poly(n)]$
and vertex demand is given by polynomially bounded demand function $b:V\to [1, \poly(n)]$. Our aim is to compute a min-cost flow
that obeys edge capacities and satisfies all vertex demands.
We give the first known NC algorithm for this problem resolving the long standing open problem from~\cite{miller-95}. This results is
further extended to finding maximum multiple-source multiple-sink flow in planar graphs.
\paragraph*{Related Work}
The key result that allowed development of NC algorithms for the perfect matching problems in planar graph is Kasteleyn's idea~\cite{kasteleyn}. He showed
that the problem of counting perfect matchings in planar graphs is reducible to determinant computation. Thus~\cite{csanky}
implied an NC algorithm checking whether a planar graph contains a perfect matching. In~\cite{VAZIRANI1989} this result was extended to $K_{3,3}$-free graphs.
Counting, however, does not allow to construct a perfect matching.
The first algorithm for constructing perfect matchings was given in~\cite{miller-95} but only in the bipartite case.
Another solution that is based on different principles was developed in~\cite{Mahajan:2000}. A partial solution for the non-bipartite case
was given in~\cite{Kulkarni2004} where an algorithm for computing half integral solution was shown. Finally, the problem has been solved for non-bipartite
planar graphs independently in~\cite{1709.07822} and here. The result of~\cite{1709.07822} extends to weighted case as well. In comparison, our
paper gives a simpler algorithm for constructing a perfect matchings in NC than the one in~\cite{1709.07822}. Moreover, we show several nontrivial
extension of this results, e.g., to cardinality bipartite matching.

When allowing randomization (i.e., when considering RNC complexity) the problem can be solved even in general non-planar graphs~\cite{karp-upfal-wigderson-86,galil-pan-88,mulmuley-vazirani-vazirani-87,Sankowski05p}. All of these Monte Carlo algorithms
can be changed into Las Vegas ones using~\cite{karloff-86}. Similarly, matching problems can be solved for a special graph class
which polynomially bounded number of matchings~\cite{5497892,Agrawal:2007:PBP:1763424.1763483}. However, these results do
not extend to the case of superpolynomial number of matchings or alternatively superpolynomial number of even length cycles.

Finally, we note that we are unaware of any previous parallel deterministic algorithms for weighted non-bipartite problems like minimum perfect matching or $f$-factor problems.
Bipartite versions of these problem have some solutions that require at least polynomial $\Omega(n^{2/3})$ time \cite{Gabow:1988:ASA:62212.62263,21935,Goldberg:1989:IMP:1398514.1398715}.

\section{Preliminaries}
$G = (V,E)$ denotes an $n$-vertex, embedded, undirected graph.  This embedding partitions
the plane into maximal open connected sets and we refer to the closures of these sets as the \emph{faces} of $G$. The number of faces is denoted by $f$.
For a subset of vertices $U\subseteq V$, $\delta(U)$ denotes
all edges $uv \in E$ having $|\{u, v\}\cap U|=1$. We write $\delta(u)$ for $\delta(\{u\})$,
 $x(F)$ for $\sum_{e\in F} x_e$ and $\deg_F(u)=|F\cap \delta(u)|$.

The linear programming formulation of the minimum perfect
matching problem and its dual is as follows~\cite{e}. An {\em odd set} has odd size; $\Omega$ is
the collection of odd subsets of $V$ of size $\ge 3$.
\newline

\begin{tabular}{cp{0.00\textwidth}|p{0.00\textwidth}c}
LP of minimum prefect matching &&& LP of the dual prolem\\
$
\begin{aligned}
\min \sum_{e\in E} w(e) x_e && \nonumber\\
x(\delta(v))&=&1, \forall v\in V\label{eqn:primal_v}\\
x(\delta(U)) &\ge& 1, \forall U \in \Omega \label{eqn:primal_b} \\
x_{e} &\ge& 0, \forall e \in E\label{eqn:primal_e}
\end{aligned}
$
&&&
$
\begin{aligned}
\max \sum_{v\in V} \pi_v + \sum_{U \in \Omega} \pi_U&&\\
\pi_u + \pi_v + \sum_{U\in \Omega,\ uv\in \delta(U)} \pi_U  &\le& w(uv), \forall uv\in E \label{eqn:dual}\ \ (*)\\
\pi_U &\ge& 0, \forall U \in \Omega\ \ \ \ \ \ \\
\end{aligned}
$
\end{tabular}
\newline

The variables $x_{e}$ in primal indicate when an edge is included in the solution. The dual problem has variables $\pi_v$ for each vertex $v$ and $\pi_U$ for each odd set $U$.

Moreover, a graph $G$ is {\em factor critical} if for all $v\in V$ after removing $v$ the graph has a perfect matching. A {\em laminar family}
is a collection $\cB$ of subsets of $V$ such that for every pair $X,Y \in \cB$ either $X\cap Y=\emptyset$, or $X\subseteq Y$, or $Y\subseteq X$.
We use the existence of the following dual.
\begin{lemma}[implicitly in \cite{e}]
\label{lem:critical}
There exists an optimal dual solution $\pi : V \cup \Omega \rightarrow \mathbb{R}$ that:
\begin{enumerate}
  \item the set system $\{U \in \Omega : \pi_U > 0\}$ forms a laminar family,
  \item for each $U \in \Omega$ with $\pi_U > 0$, the graph denoted by $G_U$ obtained
  by contracting each set $\{S \in \Omega : S \subset U, \pi_S>0\}$ to a point is factor critical.
\end{enumerate}
\end{lemma}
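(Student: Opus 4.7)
The plan is to extract any optimal dual $\pi$ from LP duality, massage it into a laminar-supported one (condition 1), and then derive factor criticality of each $G_U$ (condition 2) from complementary slackness combined with the structure from step 1.

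For the laminarity, suppose two sets $X,Y$ with $\pi_X,\pi_Y>0$ cross. Since $|X|,|Y|$ are odd, $|X\cap Y|$ and $|X\cup Y|$ share parity. When both are odd, I redistribute a small $\varepsilon$ by decreasing $\pi_X,\pi_Y$ and increasing $\pi_{X\cap Y},\pi_{X\cup Y}$ (singleton cases shift to the free variable $\pi_v$); dual feasibility is preserved by the pointwise inequality $\mathbf{1}_{uv\in\delta(X)}+\mathbf{1}_{uv\in\delta(Y)} \ge \mathbf{1}_{uv\in\delta(X\cap Y)}+\mathbf{1}_{uv\in\delta(X\cup Y)}$ on every edge $uv$, and the linear objective is unchanged. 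When $|X\cap Y|,|X\cup Y|$ are both even, the same idea applied to the odd sets $X\setminus Y$ and $Y\setminus X$ works via $\mathbf{1}_{uv\in\delta(X)}+\mathbf{1}_{uv\in\delta(Y)} \ge \mathbf{1}_{uv\in\delta(X\setminus Y)}+\mathbf{1}_{uv\in\delta(Y\setminus X)}$. Iterating, with a suitable potential (e.g.\ maximizing $\sum_U \pi_U|U|^2$ within the set of optima, tie-broken by a laminar-depth potential to handle the second case) produces a laminar optimal dual.

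For factor criticality, fix this laminar optimal $\pi$ and $U$ with $\pi_U>0$; let $C_1,\ldots,C_k$ be the maximal positive-dual proper subsets of $U$ and let $G_U$ be obtained by contracting each $C_i$. A parity count shows $|V(G_U)|$ is odd. For any minimum perfect matching $M$ (which exists by integrality of the matching polytope), complementary slackness yields $|M\cap\delta(U)|=1$ and $|M\cap\delta(C_i)|=1$ for each $i$, so the projection of $M\cap E(U)$ to $G_U$ is a near-perfect matching exposing exactly the vertex incident to the unique edge of $M$ leaving $U$. To conclude factor criticality, I would show every vertex of $G_U$ can be the exposed one in some minimum matching, arguing contrapositively: if some $w^*$ is exposed by none, then every minimum matching matches the pre-image of $w^*$ internally to $U$, and one can redistribute $\pi_U$ onto odd subsets of $U$ avoiding $w^*$'s pre-image, together with corrective shifts to $\pi_v$ for $v\in U$, without losing optimality, contradicting the extremality used in the laminarity step.

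The main obstacle is this last redistribution: explicitly splitting $\pi_U$ onto strictly smaller odd sets without losing optimality requires careful inductive bookkeeping over the laminar family, and the two different uncrossing rules above need a combined termination argument. A cleaner alternative I would fall back on in a polished write-up is to cite Edmonds' blossom algorithm, whose terminating dual is supported on blossoms that are laminar and factor critical by their very construction as nested odd-cycle shrinkings.
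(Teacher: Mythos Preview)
The paper does not prove this lemma at all; it is stated with the attribution ``implicitly in Edmonds'' and used as a black box. So there is no proof in the paper to compare against, and your fallback of citing the dual produced by Edmonds' blossom algorithm --- whose blossoms are laminar and factor critical by construction as nested odd-cycle contractions --- is exactly what the paper is invoking.

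On your direct argument: the uncrossing step for laminarity is standard and essentially correct, though the termination potential needs to be stated carefully enough to handle both parity cases simultaneously. The factor-criticality step, however, has a real gap, and it is the one you yourself identify. From complementary slackness you only get that the vertex of $G_U$ hit by the unique edge of $M\cap\delta(U)$ is exposed; to conclude factor criticality you must show that \emph{every} vertex of $G_U$ arises this way for some minimum perfect matching. Your proposed fix --- if some $w^*$ is never exposed, redistribute $\pi_U$ onto smaller odd subsets of $U$ avoiding the preimage of $w^*$ --- is not an argument as written: you have not said which smaller odd sets receive the weight, why dual feasibility on every edge of $\delta(U)$ survives the shift, or why the objective is preserved. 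Producing such sets without already knowing the factor-critical structure is essentially the content of the lemma. The two standard clean routes are (i) the blossom-algorithm construction you mention, or (ii) choosing among laminar optimal duals one that maximizes a secondary potential such as $\sum_{U}\pi_U|U|$, and then using the Gallai--Edmonds decomposition of $G_U$ to show that failure of factor criticality would let you push weight to strictly larger sets, contradicting maximality. Either of these would close the gap; your current sketch does not.
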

An optimum dual solution $\pi$ satisfying the above conditions is a {\em critical dual solution}.
A set $U \in \Omega$ such that $\pi_U > 0$ is a {\em blossom} w.r.t. $\pi$. An important idea that is
used in almost all algorithms for weighted matching is that after computing the dual we can work with a non-weighted problem.
This non-weighted problem is defined in the following way: leave only {\em tight} edges in the graph, i.e., there is equality in (*);
find a perfect matching {\em respecting} all blossoms $\cB$, i.e., such that for all $B\in \cB$ we have $|M\cap \delta(B)|=1$.
By duality slackness any matching obtained this way is a minimum perfect matching.

Laminar family of sets is equipped with a natural parent-child relation and can be seen as
a forrest. We assume this tree is rooted at $V$ and call the resulting tree {\em laminar tree}.
We note that given a laminar family it is straightforward to deduce the parent-child relation, i.e., parent of a set $B$ is the minimal set
containing $B$, whereas children of $B$ are maximal sets contained in $B$. Hence, whenever
working with a laminar family we assume that the laminar tree is available as well as it can be easily computed in NC.
A useful property of this view is that tree $T$  has a {\em vertex separator},
i.e., there exists a vertex $v$ such that the size of every connected component of $T-x$ is at most $\frac{|T|}{2}$.

\paragraph*{Basic NC Algorithms}
Our algorithm builds upon the following NC algorithms for computing:
\begin{itemize}
\item components and spanning forrest of an undirected graphs~\cite{connectivity-nc,CHONG1995378},
\item paths in a directed graph -- this can be done by repetitive matrix squaring,
\item maximal independent set in a graph~\cite{Luby:1985},
\item a vertex separator of a tree -- by computing the numbers of vertices in each subtree
using any of the standard techniques~\cite{MR1-89,715896}.
\end{itemize}

Consider a graph $G$ with an edge weight function $w:E\to [0, \poly(n)]$.
For a vertex $v$ we denote by $G_v$ the graph $G-v$.
If $G$ has an even number of vertices then $M_v$ denotes some {\em minimum almost perfect matching} in $G_v$, i.e., a
minimum weight matching that misses exactly one vertex. If $G$ has an odd number of vertices then $M_v$ denotes some minimum
perfect matching in $G_v$. $M_v$ is not defined in unique way, but its weight $w(M_v)$ is. In our algorithms we
will only use these weights that can be computed using standard techniques as shown in Appendix~\ref{appendix-matching-weights}.

\begin{corollary}
\label{corollary-matching-weights}
For a graph $G=(V,E)$ with edge weights $w:E\to [0, \poly(n)]$ we can in NC:
\begin{itemize}
\item for a given vertex $v\in V$, compute the weight $w(M_v)$,
\item for a given edge $e\in E$, check whether $e$ is {\em allowed}, i.e., belongs to some minimum perfect matching.
\end{itemize}
\end{corollary}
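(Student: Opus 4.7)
The plan is to reduce both claims to the NC computation of the weight of a minimum perfect matching in an even-sized planar subgraph, and to perform that computation by extending Kasteleyn's Pfaffian method to polynomially bounded edge weights.

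First, I would compute a planar embedding of $G$ in NC and, from it, a Pfaffian orientation: orient the edges of a spanning tree arbitrarily and then propagate around every bounded face the requirement that the number of clockwise-oriented edges be odd; given the embedding this propagation is a straightforward NC computation on the face graph. Let $\sigma : E \to \{\pm 1\}$ be the resulting Kasteleyn sign and form the skew-symmetric matrix $A$ over $\mathbb{Z}[z]$ with $A_{uv} = \sigma(uv)\, z^{w(uv)}$ for $uv \in E$ and $A_{uv} = 0$ otherwise. By Kasteleyn's theorem all perfect matchings of $G$ contribute to the Pfaffian with the same sign, so
\[
\mathrm{Pf}(A) \;=\; \pm \sum_{M} z^{w(M)},
\]
summed over perfect matchings $M$ of $G$ with no cancellation. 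Since $w$ is polynomially bounded, this is a polynomial in $z$ of $\poly(n)$ degree whose coefficients are extracted in NC via $\det(A) = \mathrm{Pf}(A)^2$ and standard parallel determinant algorithms. The smallest exponent of $z$ with nonzero coefficient is exactly the weight of a minimum perfect matching of $G$; if $\mathrm{Pf}(A) \equiv 0$, then $G$ has no perfect matching.

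I would then compute $w(M_v)$ by a case split on the parity of $|V|$. If $|V|$ is odd, $G_v = G - v$ is planar and has an even number of vertices, so the construction above applied to $G_v$ returns $w(M_v)$ directly. If $|V|$ is even, $M_v$ is an almost perfect matching of $G_v$ missing one extra vertex $u \neq v$; I would enumerate $u$ in parallel: for each $u \in V \setminus \{v\}$ the subgraph $G - \{u,v\}$ is planar and even-sized, its minimum perfect matching weight is computed as above, and $w(M_v)$ is the minimum over these polynomially many candidates. For the second bullet, an edge $e = uv$ is allowed iff $w(e) + w(\text{min PM of } G - \{u,v\}) = w(\text{min PM of } G)$; the first bullet already delivers both sides in NC, so this test runs in parallel over all edges.

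The only real obstacle is ensuring that distinct perfect matchings do not cancel when summed inside the Pfaffian — this is exactly what Kasteleyn's theorem guarantees, and its proof carries over unchanged when each adjacency entry is decorated with the monomial $z^{w(uv)}$ rather than the scalar $1$. Everything else is routine parallel linear algebra combined with the vertex- and edge-enumeration described above.
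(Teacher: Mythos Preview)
Your proposal is correct and follows essentially the same route as the paper: encode edge weights as exponents of a formal variable in the Kasteleyn matrix, read off the minimum perfect matching weight as the lowest-degree term of the determinant (equivalently the Pfaffian squared), and then reduce both bullets to this primitive by deleting one or two vertices and, in the even case, minimizing over the choice of the second deleted vertex. The only cosmetic difference is that you spell out how to build the Pfaffian orientation in NC, which the paper leaves implicit.
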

Observe that the set of allowed edge is a subset of tight edges. Hence, when we remove all not allowed edges only tight edges are
left in the graph.

Let us now state the following implication of the results in~\cite{CGSa}. Basically, assuming that all $w(M_v)$ are given, Algorithm~5
from~\cite{CGSa} gives an NC procedure for computing the blossoms of the critical dual solution. For the
completeness of the presentation we have included this simple algorithm in Appendix~\ref{appendix-finding-blossoms}.

\begin{lemma}[based on Lemma 6.19~\cite{CGSa}]
\label{lemma-critical-dual}
Let $G=(V,E)$ be undirected connected graph where edge weights are given by $w:E\to \mathbb{Z}$
and where every edge is allowed.
Given all values $w(M_v)$ for $v \in V$, the blossoms of the critical dual solution can be computed in NC.
\end{lemma}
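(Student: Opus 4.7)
\medskip
\noindent\textbf{Proof plan.}
The plan is to give an NC implementation of Algorithm~5 of~\cite{CGSa}, using the oracle for matching weights provided by Corollary~\ref{corollary-matching-weights} as the only non-trivial subroutine. Since every edge is allowed, the critical dual is uniquely determined, and the combinatorial characterization of its blossoms from~\cite{CGSa} depends only on the values $w(M_v)$ and on simple contractions of~$G$.

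First, I would use the characterization from~\cite{CGSa} to reduce "is $\{u,v\}$ inside a common blossom" to an arithmetic equality between $w(M_u)$, $w(M_v)$, and weights of a constant number of auxiliary graphs obtained by local modifications of $G$ (removing or contracting a vertex/edge). Each such auxiliary instance is again of the form covered by Corollary~\ref{corollary-matching-weights}, so every pairwise test is computable in NC; performing all $O(n^{2})$ tests in parallel gives us, in one parallel phase, a symmetric relation $R$ on $V$ whose transitive closures at successive scales are exactly the blossoms of the critical dual.

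Next, I would reconstruct the laminar family from $R$. Because blossoms are laminar, the connected-component decompositions of appropriately thresholded subgraphs of $R$ yield the blossoms at each nesting level, and all of these decompositions can be produced in parallel by the connected-components primitive; the parent-child relation on the resulting sets is then obtained by a single minimum-superset computation, which is a standard NC operation on laminar families. The laminar tree rooted at $V$ follows. As a final verification step, for each candidate blossom $B$ I would check factor-criticality of the contracted graph $G_B$ by calling the oracle of Corollary~\ref{corollary-matching-weights} for every $v \in B$ and confirming that $w(M_v)$ is finite and meets the balance condition of Lemma~\ref{lem:critical}; all these checks run in parallel.

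The main obstacle is avoiding a sequential peeling of blossoms from the inside out (or from the outside in), which would produce parallel depth proportional to the blossom nesting depth and could be $\Theta(n)$. The plan circumvents this by computing the full "co-blossom" relation on $V$ in a single parallel phase via the oracle-based pairwise test, and only then assembling the laminar tree; correctness of this one-shot characterization is exactly the content of Lemma~6.19 of~\cite{CGSa}. Everything else reduces to polylog-depth primitives (connected components, arithmetic, tree construction), yielding the required NC bound.
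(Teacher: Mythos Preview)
Your outline lands on the right endgame---thresholded connected components yielding a laminar family---but it is substantially more complicated than what the paper (and Algorithm~5 of~\cite{CGSa}) actually does, and a couple of your intermediate claims are off.

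First, no auxiliary-graph oracle calls are needed at all. The lemma \emph{assumes} the values $w(M_v)$ are given; with those in hand the entire computation is purely combinatorial. The paper's algorithm sets, for every edge $uv\in E$, the quantity $w'(uv)=w(uv)+w(M_u)+w(M_v)$, and then, for each distinct value $\alpha$ of $w'$ in parallel, takes the connected components of $(V,\{uv\in E: w'(uv)\le\alpha\})$. The nontrivial components, over all thresholds, are exactly the blossoms of the balanced critical dual. That is the whole algorithm: one arithmetic pass over edges, one parallel family of connected-components calls, done. Your proposed $O(n^2)$ pairwise ``co-blossom'' tests with extra calls to Corollary~\ref{corollary-matching-weights} on locally modified graphs are superfluous, and the factor-criticality verification step is unnecessary since correctness is already guaranteed by Lemma~6.19 of~\cite{CGSa}.

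Second, a small correctness point: critical duals are not unique in general, so ``the critical dual is uniquely determined'' is not quite right. What is unique is the \emph{balanced} critical dual; that uniqueness is precisely why the parallel thresholding produces a consistent laminar family without any sequential peeling. Your plan implicitly relies on this, but attributes it to the wrong property.
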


Algorithm~5 from~\cite{CGSa} actually constructs a critical dual
with an additional property which is called balanced. Intuitively, in a balanced dual the root of the
laminar tree is a central vertex of this tree as well. For formal definition see~\cite{CGSa} or see~\cite{Gabow:2012}
for an alternative definition of canonical dual. In short words, balanced duals are unique. Thus when one constructs them in parallel
as all processors construct the same solution, and the algorithm can be implemented in NC.

\section{The High Level Idea: Cycles and Blossoms}
This section aims to introduce two core ideas of our algorithm that allow us to reduce the size of the graph for the recursion, as well as a high level idea
that reveals around them. We will first give an algorithm for finding a perfect matching in a graph. However,
we will view the problem as weighted and seek minimum perfect matching.
This algorithm will be extended to weighted case in Section~\ref{section-minimum-perfect}.
The weighed view is useful as we will find even length cycles in the graph
and introduce weights on them. These weights will either cause some edges to become not allowed, or induce a blossom as shown by the following lemma.

To make it more precise we say that a cycle $C$ in graph $G$ is {\em semi-simple} if it
contains an edge that appears on $C$ only once. By $e_C$ we denote an arbitrary such edge on $C$.
\begin{lemma}
\label{lemma-even-cycle}
 Consider $G=(V,E)$ with edge weight function $w:E\to [0, \poly(n)]$. Let $C$ be an even semi-simple cycle in $G$. Let $w(e_C)=1$ and let $w(e)=0$ for all $e\in C-e_C$.
Then, either some edge of $C$ is not allowed or some edge of $C$ is in $\delta(B)$ for some blossom $B$.
\end{lemma}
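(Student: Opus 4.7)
I would argue by contrapositive via LP duality. Assume that every edge of $C$ is allowed and that no edge of $C$ lies in $\delta(B)$ for any blossom $B$; the aim is to derive a contradiction. Fix an optimal dual $\pi$, for instance the critical dual guaranteed by Lemma~\ref{lem:critical}. By complementary slackness, every allowed edge $e=uv$ of $C$ is tight in the constraint $(*)$, and because no edge of $C$ is in the boundary of any blossom, the blossom contribution $\sum_{U:\,uv\in\delta(U)}\pi_U$ vanishes on every edge of $C$. Tightness then reduces to the cleaner identity $\pi_u+\pi_v=w(uv)$ for every edge $uv$ of $C$.

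Next, I would exploit the even length of $C$. Realize $C$ as a closed walk $v_0,e_1,v_1,e_2,\ldots,e_{2k},v_0$ and color its $2k$ positions alternately into two classes $E_0,E_1$, each of size $k$. The key combinatorial observation is that at each visit of a vertex $v$, the two incident positions $e_i$ and $e_{i+1}$ carry opposite colors, so the position-degrees satisfy $d_0(v)=d_1(v)$ for every vertex $v$ (whether or not $v$ is revisited). Summing the tight identity $\pi_u+\pi_v=w(uv)$ across all positions in $E_i$ yields
\[
\sum_{e\in E_i} w(e) \;=\; \sum_{v\in V} d_i(v)\,\pi_v,
\]
and the equality of the position-degrees forces the sums for $i=0$ and $i=1$ to coincide.

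Finally, the semi-simple hypothesis supplies the contradiction. Since $e_C$ occurs on $C$ exactly once, it lies in exactly one of $E_0,E_1$ and contributes $w(e_C)=1$ to that side only, while every other position of $C$ carries weight $0$. So one of the two sums equals $1$ and the other equals $0$, contradicting their equality obtained above.

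The one subtlety to keep an eye on is that $C$ need not be a simple cycle: it may repeat edges and vertices. Vertex repetitions are absorbed harmlessly by the per-visit color-balance observation, and repetitions of weight-$0$ edges do not affect the weight sums. The semi-simple assumption is tailored precisely so that the unique weighted edge $e_C$ cannot appear on both color classes and have its weight cancel itself out; this is where that hypothesis is essential.
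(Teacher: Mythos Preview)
Your proof is correct and follows essentially the same approach as the paper: assume all edges of $C$ are tight with no blossom term, then exploit an alternating/telescoping sum of the identities $\pi_u+\pi_v=w(uv)$ along the even closed walk to force the total weight in the two alternating classes to coincide, contradicting the fact that $e_C$ contributes $1$ to exactly one side. The paper phrases this as a telescoping along the path $C-e_C$ to obtain $\pi_{u_C}+\pi_{v_C}=0$ versus $\pi_{u_C}+\pi_{v_C}=1$, which is the same computation; your position-based degree-balance formulation is arguably a cleaner way of handling repeated vertices in the non-simple walk, but the underlying argument is identical.
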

\begin{proof}
Assume by a contradiction that all edges of $C$ are allowed and there is no blossom intersecting $C$. Hence, by complimentary slackness
conditions we have that $\pi_x+\pi_z=w(xz)$ for all edges $xz\in C$. In particular, for the edge $u_Cv_C=e_C$ we need to have $\pi_{u_C}+\pi_{v_C}=1$, whereas
for all other edges $uv\in C-e_C$ we have $\pi_u+\pi_v=0$. Now consider edge $zy\in C$ which is next to $xz$.
By subtracting equalities for edge $xz$ and $zy$ we obtain $\pi_x-\pi_y=w(xz)-w(zy)$. If $yz$ and $xz$ are not equal to $e_C$ we have $\pi_x-\pi_y=0$. In
general, we have $\pi_x-\pi_y=0$ for any two vertices at even distance along path $C-e_C$. And $\pi_x+\pi_y=0$
for any two vertices at odd distance along path $C-e_C$. Note that the distance from $u_C$ to $v_C$
along $C-e_C$ is odd, so we obtain $\pi_{u_C}+\pi_{v_C}=0$, what leads to contradiction.
See Figure~\ref{fig:double-faces}~a) for an illustration.
\end{proof}

Blossoms are natural objects to recurse on, as by duality slackness there must be
exactly one edge of any perfect matching $M$ that belongs to $\delta(B)$ for any blossom $B$. Thus, in the recursion,
we can find an almost perfect matching outside of $B$, an almost perfect matching inside of $B$ and then combine
them by matching one edge in $\delta(B)$ -- see Section~\ref{section-single-path}. However, having an edge in $\delta(B)$ does not directly guarantee that the size
of the graph reduces in the recursion. We need the following stronger observation for this.

\begin{lemma}
\label{lemma-two-sides}
Consider $G=(V,E)$ with edge weight function $w:E\to [0, \poly(n)]$. Let $C$ be an even semi-simple cycle in $G$. Let $w(e_C)=1$ and let $w(e)=0$ for all $e\in C-e_C$.
Then there exist edges $e_1,e_2\in C$, such that $e_1\in E(G\setminus B)$ and $e_2\in E(B)$.
\end{lemma}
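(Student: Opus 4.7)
The plan is to refine the complementary-slackness / signed-sum calculation used in the proof of Lemma~\ref{lemma-even-cycle}. If some edge of $C$ is not allowed then Lemma~\ref{lemma-even-cycle} already puts us in its first case and the present claim is not needed, so I may assume every edge of $C$ is allowed, hence tight with respect to the critical dual $\pi$. The strategy is to first single out a particular blossom $B$ whose signed contribution to the resulting dual identity is nonzero, and then to show via a short cancellation argument that if $C$ contained no edge of $E(B)$---or, symmetrically, no edge of $E(G\setminus B)$---then $B$'s signed contribution would be forced to vanish, contradicting the choice of $B$.

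Label the edges of $C$ cyclically as $e_0,\dots,e_{|C|-1}$ and set $\epsilon_i:=(-1)^i$; this is consistent around $C$ because $|C|$ is even. Writing the tight constraint $\pi_x+\pi_y+\sum_{U:\,xy\in\delta(U)}\pi_U=w(xy)$ for every $xy=e_i\in C$ and summing the equations with weights $\epsilon_i$, the vertex terms cancel (each vertex of $C$ lies on two consecutive edges with opposite $\epsilon$-signs), exactly as in Lemma~\ref{lemma-even-cycle}. What remains is
\[\sum_{U\in\Omega}\pi_U\,S_U(C)\;=\;\sum_i \epsilon_i\,w(e_i)\;=\;\pm 1,\qquad S_U(C)\;:=\;\sum_{e_i\in C\cap\delta(U)}\epsilon_i.\]
Consequently there must exist a blossom $B$ with $\pi_B>0$ and $S_B(C)\neq 0$; this is the $B$ I fix for the statement.

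The pivotal observation is the following: if two consecutive $C$-edges $e_i,e_{i+1}$ share a vertex $v$ and both lie in $\delta(B)$, then $\epsilon_i+\epsilon_{i+1}=0$, so the pair contributes $0$ to $S_B(C)$. Now suppose for contradiction that $C\cap E(B)=\emptyset$. Then every vertex $v\in V(C)\cap V(B)$ has both of its $C$-edges in $\delta(B)$, so the crossings in $C\cap\delta(B)$ decompose into exactly these consecutive cancelling pairs, forcing $S_B(C)=0$---contradicting the choice of $B$. Hence some edge $e_2\in C\cap E(B)$ exists. The symmetric argument, with ``inside $V(B)$'' replaced by ``outside $V(B)$'' throughout, shows that $C\cap E(G\setminus B)=\emptyset$ would likewise force $S_B(C)=0$, and so the required $e_1\in C\cap E(G\setminus B)$ must exist as well.

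The main obstacle is really just isolating the correct blossom $B$: the sum identity guarantees at least one with $S_B(C)\neq 0$, but a blossom whose boundary meets $C$ only through ``in-and-out'' crossings at isolated $C$-vertices will have $S_B(C)=0$, and for such blossoms the two-sided inclusion genuinely fails. Once the right $B$ is fixed, the parity-based pairing argument is essentially a one-line check and the two required edges $e_1,e_2$ are obtained by the same reasoning applied to the two sides of $B$.
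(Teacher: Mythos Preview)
Your proof is correct and follows essentially the same cancellation idea as the paper's: if $C$ has no edge with both endpoints in $B$ (resp.\ both outside $B$), then every crossing of $\delta(B)$ by $C$ occurs at an isolated vertex of $B$ on $C$, whose two incident $C$-edges contribute with opposite signs and hence cancel, contradicting that $B$'s total contribution to the alternating sum is nonzero. The paper phrases this as a telescoping subtraction along the path $C-e_C$, while you package it as an alternating-sign sum around the whole cycle; these are equivalent.

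One point where your write-up is actually more careful than the paper's: the statement does not specify \emph{which} blossom $B$ is meant, and as you correctly note, the conclusion can fail for a blossom whose boundary only touches $C$ in cancelling pairs. You explicitly select a $B$ with $S_B(C)\neq 0$ (whose existence follows from $\sum_U\pi_U S_U(C)=\pm1$), whereas the paper leaves this implicit. This is a genuine clarification, not a different method.
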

\begin{proof}
For contradiction assume that no edge of $C$ is in $E(B)$ -- the other case is symmetric.
Hence, there can only be vertices $z$
in $E(B)$ such that its both incident edges $xz$ and $zy$ on $C$ are in $\delta(B)$.
In such a case we have $\pi_x+\pi_z+\pi_B=w(xz)$ and $\pi_z+\pi_y+\pi_B=w(zy)$. Thus
$\pi_x-\pi_y=w(xz)-w(zy)$, i.e., the contribution of $\pi_B$ when subtracting these equalities cancels.
See Figure~\ref{fig:double-faces}~b) for illustration. Thus $B$
does not contribute anything to the telescoping sum along $C-e_C$ and we reach similar contradiction as in the previous lemma.
\end{proof}

Hence, when recursing on the inside of the blossom or on the outside of the blossom we reduce the graph size as well. However, to obtain
an NC algorithm we need to reduce the size of the graph by a constant factor, so we need to have $\Omega(n)$ edge disjoint even cycles.
The following lemma, that was implicitly proven in~\cite{Kulkarni2004}, becomes handy now.
\begin{lemma}
\label{lemma-double-faces-x}
In a $2$-connected planar graph $G$ with $f$ faces we can find $\Omega(f)$ edge disjoint even semi-simple cycles in NC.
\end{lemma}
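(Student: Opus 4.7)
The plan is to construct the required cycles from face boundaries in the planar dual. First, I would compute $G^*$ and a planar embedding in NC (standard). Since $G$ is $2$-connected, every face boundary is a simple cycle, which is trivially semi-simple.

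The core observation is that if two adjacent faces $F_1, F_2$ share primal edge $e$, then the boundary of their union is a closed walk of length $|\partial F_1| + |\partial F_2| - 2$ which is even exactly when $|\partial F_1|$ and $|\partial F_2|$ have the same parity; any edge of $\partial F_i\setminus\{e\}$ appears on the walk exactly once and serves as the certifying edge $e_C$ for semi-simplicity. An even-length face by itself is likewise already an even semi-simple cycle.

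Accordingly, I would form a graph $H$ on the vertex set of $G^*$ whose edges are the same-parity adjacencies of $G^*$, augmented by a self-loop at each even-length face. Any maximal matching $\mathcal M$ of $H$ (with a self-loop covering one vertex) can be computed in NC via Luby's MIS on the line graph of $H$. Matched edges of $\mathcal M$ give merged cycles and self-loops give face cycles; edge-disjointness follows immediately because every primal edge borders exactly two faces and each face is used at most once in $\mathcal M$.

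The main obstacle is proving $|\mathcal M| = \Omega(f)$. I would split on whether at least $f/2$ faces are even. In the easy case, the even faces alone form $\Omega(f)$ vertices in the planar graph $G^*$, so a maximal independent set among them (which yields that many self-loops selected in $\mathcal M$) has linear size since planar graphs have average degree at most $6$. Otherwise, more than $f/2$ faces are odd; the number of odd faces is even because $\sum_F |\partial F| = 2m$, and I would argue, using that $G^*$ inherits $2$-edge-connectivity from the $2$-connectivity of $G$, that the subgraph of $G^*$ induced on the odd faces has maximum matching of size $\Omega(f)$, whence any maximal matching has the same order. The delicate step is ruling out the degenerate configuration in which odd faces form a huge independent set in $G^*$; the natural tool is to exploit that no odd face can be completely surrounded by even faces without forcing parity constraints inconsistent with $\sum_F |\partial F|=2m$, combined with a standard planarity counting bound on matching size.
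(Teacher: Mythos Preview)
Your approach has a genuine gap at exactly the point you flag as ``delicate'': the configuration where the odd faces form a large independent set in $G^*$ cannot be ruled out. Consider the graph on vertices $v_0,\ldots,v_{n-1}$ (with $n$ a multiple of~$4$) consisting of the cycle $C_n$ together with the chords $v_0v_2, v_2v_4,\ldots,v_{n-2}v_0$. This graph is $2$-connected, has $f=n/2+2$ faces, of which $n/2$ are triangles and the remaining two are the outer $n$-gon and the inner $(n/2)$-gon. Every triangle borders only the two even faces, and the two even faces are not adjacent to one another. Hence your graph $H$ has \emph{no} same-parity adjacencies at all and only two self-loops, so any matching $\mathcal M$ has size at most~$2$ while $f\to\infty$. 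The parity identity $\sum_F|\partial F|=2m$ only forces the number of odd faces to be even; it imposes no adjacency constraint whatsoever, so the argument you sketch cannot be completed. (A secondary issue: your length formula $|\partial F_1|+|\partial F_2|-2$ presumes a single shared edge, but two faces of a $2$-connected planar graph can share an entire path of edges.)

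The paper avoids this obstruction by \emph{not} restricting to same-parity pairs. It takes a spanning tree of $G^*$ and pairs essentially all faces into ``double faces'' $(f_1,f_2,p)$ --- two faces joined by a (possibly trivial) path along the tree --- regardless of parity. If either face is even, that face alone is the desired even simple cycle. If both are odd, a short case analysis (Lemma~\ref{lemma-double-face}) extracts an even semi-simple cycle from the double face: when the faces share edges one takes a suitable arc of $f_1$ closed off by the arc of $f_2$ of the correct parity, and when they are disjoint the walk around $f_1$, along $p$, around $f_2$, and back along $p$ has even length. Edge-disjointness of $\Omega(f)$ of these is then obtained by merging each double face into a single face of an auxiliary planar graph $G'$ and invoking the fact that any planar graph with $f'$ faces has $\Omega(f')$ pairwise edge-disjoint faces. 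In your gear example this pairs consecutive triangles through their shared vertex and yields $\Omega(f)$ edge-disjoint even $6$-cycles, which your same-parity matching cannot see.
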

A simplified proof of this lemma is given in Appendix~\ref{appendix-finding-double-faces}. Intuitively, the proof
of this lemma puts faces into pairs that are incident. Now, either a pair contains an even face and thus this face is semi-simple, or both faces
are odd. In the later case we can built an even semi-simple cycle by walking around both faces.

The above lemma shows that to have many even semi-simple cycles we just need to guarantee that the graph has many faces. Let us
say that a planar graph $G$ is {\em simplified}, if there are no degree $1$ vertices and no two vertices of degree $2$ are incident.
The next lemma shows that such graphs have many faces.
\begin{lemma}
\label{lemma-degree-3}
Let $G=(V,E)$ be a simplified planar graph with a perfect matching, then $G$ has at least $\frac{n}{4}+2$ faces.
\end{lemma}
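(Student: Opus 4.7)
The plan is a short calculation combining Euler's formula with a degree count that exploits both the ``simplified'' hypothesis and the existence of a perfect matching.

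First I would invoke Euler's formula. Since $G$ is planar (and assuming at least one component for each connected piece only strengthens the bound), we have $f \ge 2 - n + m$, so it suffices to prove $m \ge \tfrac{5n}{4}$, i.e.\ that the degree sum satisfies $\sum_{v} \deg(v) = 2m \ge \tfrac{5n}{2}$.

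Next I would split the vertices by degree. Because $G$ is simplified there are no degree-$1$ vertices and no degree-$0$ vertices (a degree-$0$ vertex cannot be covered by a perfect matching). Let $a$ be the number of vertices of degree exactly $2$ and $b = n - a$ the number of vertices of degree $\ge 3$. Now I use both remaining hypotheses: by the simplified condition, the degree-$2$ vertices form an independent set, so in the fixed perfect matching $M$, every degree-$2$ vertex $v$ is matched to a partner of degree $\ge 3$. Since $M$ is a matching these partners are distinct, giving an injection from the degree-$2$ vertices into the degree-$\ge 3$ vertices, and hence $a \le b$, i.e.\ $a \le n/2$.

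Finally, plugging into the degree sum,
\begin{equation*}
2m \;\ge\; 2a + 3b \;=\; 2a + 3(n-a) \;=\; 3n - a \;\ge\; 3n - \tfrac{n}{2} \;=\; \tfrac{5n}{2},
\end{equation*}
so $m \ge \tfrac{5n}{4}$ and therefore $f \ge 2 + \tfrac{n}{4}$. There is no real obstacle here; the only subtlety is making sure both hypotheses of ``simplified'' are used correctly (no degree-$1$ vertices to get the ``$\ge 2$'' part, and no two adjacent degree-$2$ vertices to turn the perfect matching into the required injection). If $G$ has multiple components, Euler's formula yields $f = 1 + c - n + m \ge 2 - n + m$, so the bound only improves.
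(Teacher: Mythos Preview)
Your proof is correct and is essentially identical to the paper's own argument: both use Euler's formula together with the degree-sum bound $2m\ge 2a+3b$ and the inequality $a\le b$ obtained by matching each degree-$2$ vertex to a distinct degree-$\ge 3$ partner via the perfect matching. Your write-up is slightly more careful (explicitly ruling out degree-$0$ vertices and handling multiple components), but the method is the same.
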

\begin{proof}
By Euler's formula $n-m+f=2$, where $m$ is the number of edges, and $f$ is the number of faces.
Let $V_2$ be the set of degree $2$ vertices in $G$. Consequently, the vertices in $V\setminus V_2$ have degree at least $2$.
As $G$ has a perfect matching and no two degree $2$ vertices are incident, all vertices
in $V_2$ need to be matched to vertices in $V\setminus V_2$, i.e., $|V_2|\le |V\setminus V_2|$. By using
this inequality we have.
\[
2m=\sum_{v\in V}\deg(v)=\sum_{v\in V_2}\deg(v) +  \sum_{v\in V\setminus V_2}\deg(v)\ge 2|V_2|+3|V\setminus V_2| \ge \frac{5}{2}|V_2| + \frac{5}{2}|V\setminus V_2|=\frac{5n}{2}.
\]
By plugging this inequality into Euler's formula we obtain $f=2+m-n\ge 2+\frac{5n}{4}-n=2+\frac{n}{4}$.
\end{proof}

Thus a simplified graph has many faces and many edge disjoint even cycles. We can assign weights to each of these cycles separately. This way
either many edges become not allowed, or there are many blossoms containing edges inside. The main technical part of the algorithm
is to handle family of blossoms. The algorithm has the following steps.
\begin{enumerate}
\item Simplify the graph as shown in Section~\ref{section:degree-2}. First, we
take care of degree $1$ vertices, by removing not allowed edges using Corollary~\ref{corollary-matching-weights} and matching independent edges. Next, we contract
paths composed of degree $2$ vertices. Finally, we find perfect matchings in each connected component separately. Due to removal of not allowed edges each
connected component is $2$-connected as well.
\item Using Lemma~\ref{lemma-double-faces-x} we find many edge disjoint even length cycles.
We assign weights to even cycles using Lemma~\ref{lemma-even-cycle} and we remove not allowed edges.
Next, we find blossoms of the critical dual using Lemma~\ref{lemma-critical-dual}. These steps are described in Section~\ref{section-main-routine}.
\item Finally, we recurse on a critical dual as explained in Section~\ref{section-single-path}, where we show how to construct
a perfect matching that respect all blossoms in the dual. This construction is recursive and calls back step 1 for subgraphs
of $G$ that do not have any blossoms.
\end{enumerate}
The recursion depth in this procedure is $O(\log n)$ as
either there $\Omega(n)$ edges become not allowed, or there are many blossoms. In the second case for each blossom $B_e$ there is an edge $e_1$ inside and an edge $e_2$ outside of it -- see Figure~\ref{fig:blossoms} a). In the case when we have many blossoms we observe that they divide the graph into regions and each lowest level recursive call goes to one
of these regions, i.e., we recourse on inside and outside of some blossom as long as there are some blossoms intersecting the current subgraph. As visualized on Figure~\ref{fig:blossoms} b)
there are many edges that are not incident to each region, as one of the edges from each pair $(e_1,e_2)$ needs be outside of this region.
Hence the size of each of these regions decreases by a constant factor with respect to the original graph. This will be proven more formally in Section~\ref{section-main-routine}
where we analyze the running time of the algorithm.

\section{Simplifying the Graph}
\label{section:degree-2}
The first ingredient of our algorithm is the following entry procedure that simplifies the graph and assures that we are working with $2$-connected graphs.
\begin{algorithm}[H]
\begin{algorithmic}[1]
\State{Remove all not-allowed edges from $G$.}
\State{Add independent edges to $M$ and remove them from $G$.}
\ForAll{paths $p$ composed out of degree $2$ vertices}\Comment{In parallel}
\If{$p$ has odd length}
\State{remove $p$ and connect vertices incident to $p$ with an edge $e_p$.}
\Else
\State{contract $p$ to a single vertex $v_p$.}
\EndIf
\EndFor
\State{Compute connected components $\cC$ of $G$}
\ForAll{components $C\in \cC$ of $G$}\Comment{In parallel}
\State{find perfect matching $M_C$ using Algorithm~\ref{algorithm-matching}.}
\EndFor
\State{Extend matching $M$ on paths of degree $2$ vertices.}
\State{return $M \cup \bigcup_{C\in \cC} M_C$.}
\end{algorithmic}
\caption{Simplifies graph $G$ and seeks perfect matchings on its $2$-connected components.}
\label{algorithm-degree-reduction}
\end{algorithm}

The following lemma proves the correctness of the above algorithm.
\begin{lemma}
Algorithm~\ref{algorithm-degree-reduction} executes Algorithm~\ref{algorithm-matching} on a simplified $2$-connected graph.
\end{lemma}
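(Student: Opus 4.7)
The plan is to verify two properties of each connected component handed to Algorithm~\ref{algorithm-matching}: it is \emph{simplified} (no degree-1 vertex, and no two degree-2 vertices are adjacent) and it is 2-connected. I would trace these through the three processing stages of Algorithm~\ref{algorithm-degree-reduction}: removal of not-allowed edges (step 1), removal of $K_2$ components (step 2), and contraction/replacement of degree-2 paths (steps 3--9). After step 1 every surviving edge lies in some minimum perfect matching; this ``every edge is allowed'' invariant will drive both arguments.

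Ruling out degree-1 vertices is easy: if $v$ has degree 1 after step 1 with unique neighbor $u$, then $uv$ must lie in every perfect matching, which means no other edge incident to $u$ lies in any perfect matching; such edges would have been removed in step 1, so $u$ also has degree 1 and $\{u,v\}$ is a $K_2$ component removed in step 2. For the simplified property it then remains to exclude adjacent degree-2 vertices after the path processing. The only new degree-2 vertices are the contracted $v_p$'s, whose two neighbors are the former path endpoints $u$ and $w$ of degree $\ge 3$; this degree lower bound is preserved because each path incident to such an endpoint is swapped either for a single edge $e_p$ or for an edge to $v_p$, leaving the endpoint's degree unchanged.

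The main obstacle is 2-connectedness, and the core step is to show that after step 1 no connected component has a cut vertex (except $K_2$'s, which step 2 discards). Suppose for contradiction that $v$ were a cut vertex and let $C_1,\dots,C_k$ with $k\ge 2$ be the components of the relevant component minus $v$. Since the graph after step 1 still contains a minimum perfect matching of $G$, in any such matching $v$ is matched into some single $C_i$, and a parity count forces $|C_i|$ odd and $|C_j|$ even for all $j\ne i$. But $v$ being a cut vertex means it has neighbors in at least two distinct $C_j$'s; picking an allowed edge from $v$ into some $C_j$ with $j\ne i$ and a perfect matching using it would force the opposite parities on the same sets $C_i,C_j$, a contradiction. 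Thus $v$'s allowed edges all lie in a single $C_i$, so $v$ is not a cut vertex after step 1.

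Finally, the degree-2 path processing in steps 3--9 preserves 2-connectedness: replacing the path $u\!-\!v_1\!-\!\dots\!-\!v_k\!-\!w$ by the edge $e_p=uw$, or contracting it to $v_p$, keeps every pair of vertices outside the path just as connected as before, so no new cut vertex can appear. I would handle as explicit special cases the degenerate situations of closed paths ($u=w$, which would otherwise produce self-loops) and components consisting of a single cycle of degree-2 vertices; elsewhere the parity argument of the preceding paragraph is the only nontrivial ingredient.
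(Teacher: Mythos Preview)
Your proposal is correct and follows essentially the same line as the paper: the forced-matching argument for degree-$1$ vertices, the observation that path replacement leaves endpoint degrees unchanged, and the parity contradiction for a putative cut vertex are exactly the ingredients the paper uses. Your version is somewhat more careful in that you establish $2$-connectedness immediately after the removal of not-allowed edges and then argue separately that the path operations preserve it, whereas the paper runs the parity argument directly on the final component; this extra tidiness is harmless and changes nothing substantive.
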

\begin{proof}
We first observe that after removing not allowed edges, an edge $e$ incident to a vertex of with degree $1$ needs to belong to a
perfect matching. Moreover, $e$ is independent. Hence, after matching such independent edges all vertices have degree $2$ or higher.
Now note that the manipulation of the path $p$ does not change degrees of other vertices.
In the algorithm such path is either replaced by an edge, or single vertex -- see Figure~\ref{fig:degree-2}. Hence, afterwards
degree $2$ vertices are independent. Thus $G$ is simplified and its all connected components are simplified as well.

Now, for contrary assume that a connected component $C$ is not $2$-connected, i.e., there exists an articulation point $v$. Consider connected components
obtained from $C$ after removal of $v$. Only one of them can be matched in the perfect matching to $v$. Thus only this one
has odd number of vertices. The remaining components must have even number of vertices and no perfect matching can match them to $v$.
Hence, their edges incident to $v$ are not allowed. However, all not allowed edges were removed by the algorithm, so we
reach a contradiction.
\end{proof}
In the final step of the algorithm we need to expand all paths that were replaced during the execution of Algorithm~\ref{algorithm-degree-reduction}.
Observe that the matching $M$ in the simplified graph can be extended to the matching in the orginal graph in a straight-forward way. If $p$ was odd, then
depending on whether $e_p$ is matched we either match even or odd edges on $p$. If $p$ was even, then $v_p$ has degree $2$ and it can be matched in
one of two possible ways. In these two cases the matching can be extended to the whole path.

\section{The Main Routine}
\label{section-main-routine}

Algorithm~\ref{algorithm-matching} implements the main procedure of the algorithm. First, we
 find $\Omega(n)$ even semi-simple cycles and introduce weights on them. Next, in order to reduce the size of the graph we remove not allowed edges.
We then find blossoms of the critical dual solution with respect to these weights and call Algorithm~\ref{algorithm-matching-recurse} to find
a perfect matching that respects all blossoms.

\begin{algorithm}[H]
\begin{algorithmic}[1]
\State{Find a set $\mathcal{F}$ of $\Omega(n)$ edge disjoint even semi-simple cycles using Lemma~\ref{lemma-double-faces-x}.}
\State{Set $w(e)=0$ for all $e\in E$.}
\ForAll{$C \in \mathcal{F}$}\Comment{In parallel}
\State{Set $w(e_C)=1$.}
\EndFor
\label{algorithm-matching-f}
\State{Remove all not-allowed edges from $G$.}
\State{Compute blossoms of a critical dual $\cB$ with respect to $w$ using Lemma~\ref{lemma-critical-dual}.}
\State{Compute a matching $M$ that respects $\cB$ using Algorithm~\ref{algorithm-matching-recurse}.}
\State{Return $M$.}
\end{algorithmic}
\caption{Finds a perfect matching $M$ in a connected graph $G$. If the graph is not connected we call the procedure
for each component separately.}
\label{algorithm-matching}
\end{algorithm}


\section{Finding a Perfect Matching that Respects a Family of Blossoms}
\label{section-single-path}
Let $G=(V,E)$ be a graph and let $M$ be a matching in $G$. An {\em alternating path} $p$ is a path in $G$
such that edges on $p$ alternate between matched and unmatched. Assume that $G$ is factor critical (inside of a blossom)
and that $M_s$ is an almost perfect matching that misses vertex $s$. We start by showing
how using $M_s$ we construct an almost perfect matching $M_v$ for any $v$. To this end, we need
to find a simple alternating path starring in $s$ and ending in $v$. Denote by $w_M$  a weight function assigning $0$ to edges in $M_s$ and $1$ to edges not in $M_s$.

\begin{lemma}
\label{lemma-short-paths}
Let $M_v$ be the minimum almost perfect matching in $G_v$ with respect to $w_M$, then
$2w_M(M_v)$ is the length of the shortest alternating path with respect to $M_s$ from $s$ to $v$.
\end{lemma}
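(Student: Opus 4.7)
The plan is a standard symmetric-difference argument relating alternating paths to matching weights under a 0/1 cost function.

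First, I would observe that an alternating $s$-$v$ path $P$ with respect to $M_s$ must have even length: since $s$ is unmatched, the first edge of $P$ is non-matching; since $v$ is matched in $M_s$ (as $M_s$ misses only $s$), and since we want $v$ to be uncovered after the symmetric difference, the last edge must be matching. So $P$ has equal numbers of matching and non-matching edges, say $k$ of each, with $|P|=2k$.

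Next I would verify the upper bound $w_M(M_v)\le |P|/2$ for $P$ the shortest alternating $s$-$v$ path. The symmetric difference $M_s \triangle P$ covers every vertex except $v$ (it exchanges the roles of $s$ and $v$ as uncovered endpoints), so it is an almost perfect matching of $G_v$. Its edges are either edges of $M_s$ outside $P$ (weight $0$) or non-matching edges of $P$ (weight $1$, and there are exactly $k$ of them). Hence $w_M(M_s\triangle P)=k=|P|/2$, so $w_M(M_v)\le |P|/2$.

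For the matching lower bound, I would take any almost perfect matching $M'$ of $G_v$ and decompose $M_s\triangle M'$. Every vertex other than $s$ and $v$ has degree $0$ or $2$ in this symmetric difference, while $s$ and $v$ have degree exactly $1$. Therefore $M_s\triangle M'$ is the disjoint union of an alternating $s$-$v$ path $Q$ and some alternating cycles $C_1,\ldots,C_r$. Since all edges of $M_s$ have $w_M$-weight $0$, we get $w_M(M')=w_M(M_s\triangle M')=w_M(Q)+\sum_i w_M(C_i)$. Each alternating cycle has equal numbers of matching and non-matching edges, giving $w_M(C_i)=|C_i|/2\ge 0$, and by the earlier parity argument $w_M(Q)=|Q|/2$. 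Thus $2w_M(M')\ge |Q|\ge |P|$, and taking $M'=M_v$ gives $2w_M(M_v)\ge |P|$, completing the equality.

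The only subtle point (and the thing I would be most careful about) is the parity/structure claim that the shortest alternating path must end with a matched edge at $v$ so that $M_s \triangle P$ has the correct uncovered vertex, and that $M_s\triangle M'$ really does split as one $s$-$v$ alternating path plus alternating cycles; both follow from bookkeeping on vertex degrees in $M_s\triangle M'$ together with the fact that $G$ is factor critical (so $G_v$ has a perfect matching and $s$, $v$ lie in the same component of $M_s\triangle M'$). Once these are set, the equality $2w_M(M_v)=|P|$ falls out of matching the two bounds.
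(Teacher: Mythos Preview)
Your proof is correct and follows essentially the same symmetric-difference approach as the paper: both argue that $M_s\oplus M_v$ decomposes into an alternating $s$--$v$ path plus alternating cycles, and that $w_M(M_v)$ counts exactly the non-$M_s$ edges in this decomposition. The paper's version is much terser (it only sketches one direction and leaves the cycle contribution implicit), whereas you spell out both inequalities and the parity check carefully; your argument is the more complete of the two.
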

\begin{proof}
Observe that the symmetric difference $M_s \oplus M_v$ contains an alternating path with respect to $M_s$ that
needs to start at $s$ and end at $v$. The weight of this path is equal to the number of edges from $M_s$ on it. As this
path is alternating the number of edges of $M_v$ is the same. Thus minimizing $w_M(M_v)$ we minimize the length of
an alternating path with respect to $M_s$ from $s$ to $v$.
\end{proof}

Now, we want to construct a graph $G_L$ that will represent
all shortest alternating paths from $s$ with respect to $M$. $G_L$ will be a layered graph, where
layer $l$ contains vertices at distance $l$ from $s$ --- the distance is measured along alternating paths.
For each $v\in V$, let $G_L$ contain two copies $v^o$ and $v^e$ of $v$.
We define $l(v^e)= 2w_M(M_v)$ for all $v\in V$, and $l(v^o)= l(u^e)-1$ for all $uv\in M$.
We add edges of $G$ to $G_L$ only if they connect two consecutive layers given by $l$ -- see Figure~\ref{fig:single-path}.
Every shortest alternating path from $s$ is contained in $G_L$ by Lemma~\ref{lemma-short-paths}. Alteratively, if a path in
$G_L$ represents a simple path in $G$ then it is a shortest alternating path in $G$. However, there are
path in $G_L$ that do not correspond to simple paths in $G$, i.e., they contain both $v^o$ and $v^e$ for some $v\in V$ -- see Figure~\ref{fig:single-path} b).
Nevertheless, as every vertex in $G$ is reachable via alternating path, we can modify $G_L$ in such a way that only path corresponding to simple path in $G$
remain. This is done using Algorithm~\ref{algorithm-fix-up}.
\begin{algorithm}[H]
\begin{algorithmic}[1]
\State{For all $v\in V$ compute $w_M(M_v)$.}
\State{Let $G_L$ be a graph where $v$ has two copies $v^o$ and $v^e$.}
\State{For all $v\in V$ set $l(v^e)= 2w_M(M_v)$.}
\State{For all $vu \in M$ set $l(v^o)= l(u^e)-1$.}
\State{Add edges of $G$ to $G_L$ only if they connect vertices in consecutive layers $l$.}
\ForAll{$v\in V$}\Comment{In parallel}
\ForAll{$u^z$ on some $v^x$-$v^y$ path in $G_L$, where $x,y,z\in \{e,o\}$, $x\neq y$}\Comment{In parallel}
\If{there exists $s^e$-$u^z$ path $p$ avoiding $v^x$ in $G_L$}
\State{remove all edges entering $u^z$ but the edge on $p$.}
\EndIf
\EndFor
\EndFor
\State{Return any path from $s^e$ to $t^e$ in $G_L$.}
\end{algorithmic}
\caption{Finds an alternating path with respect to $M_s$ in $G$ from a vertex $s$ to vertex $t$.}
\label{algorithm-fix-up}
\end{algorithm}

The correctness of this algorithm is established by the next lemma.

\begin{lemma}
\label{lemma-fix-up}
Let $G$ be a factor critical graph and let $M_s$ be an almost perfect matching missing vertex $s$.
An almost perfect matching $M_u$ missing vertex $u$ can be found in NC using Algorithm~\ref{algorithm-fix-up}.
\end{lemma}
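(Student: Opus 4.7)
The plan is to prove three things: that the layered graph $G_L$ faithfully encodes every shortest alternating walk from $s$ with respect to $M_s$; that after the cleanup loop every surviving $s^e$-$t^e$ walk projects to a simple alternating path in $G$ while at least one such walk remains; and that every step runs in NC.

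The first point is almost immediate from Lemma~\ref{lemma-short-paths}. The label $l(v^e)=2 w_M(M_v)$ equals the length of the shortest alternating walk from $s$ to $v$ whose last edge lies outside $M_s$, and $l(v^o)=l(v^e)-1$ records that the same $v$ may also be reached one step earlier along its matched edge. Retaining only edges between consecutive layers forces any $s^e$-$v$ walk in $G_L$ to have length equal to its layer index, matching the shortest alternating distance in $G$. Factor-criticality of $G$ guarantees that $M_v$ is defined for every $v$, so $G_L$ contains at least one $s^e$-$v^e$ walk for each $v$, and in particular for $v=u$.

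The second point is the core of the argument. The only way a walk in $G_L$ fails to project to a simple path in $G$ is if it visits both copies $v^o$ and $v^e$ of some vertex $v$, so I would prove that after the cleanup no surviving $s^e$-$t^e$ walk in $G_L$ does this. Consider an offending vertex $u^z$ lying on some $v^x$-$v^y$ segment with $x\ne y$. Its label $l(u^z)$ is the shortest alternating distance from $s$ to $u$, and because $G$ is factor critical such a shortest alternating walk from $s$ to $u$ can be chosen so as not to detour through $v$ at all; the cleanup pins $u^z$'s entry to such an avoiding walk. I would then argue that these local fixes combine globally by choosing predecessors according to a canonical rule (for instance, the lexicographically minimum avoiding walk) before the parallel pass, so that the simultaneous edge-removals at different $u^z$'s cannot conflict. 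Once the invariant holds, any $s^e$-$t^e$ walk $P$ in the pruned $G_L$ is simple, and $M_u := M_s \oplus P$ is the desired almost perfect matching missing $u$.

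The NC implementation is routine: the weights $w_M(M_v)$ are computed in parallel via Corollary~\ref{corollary-matching-weights}; labels and edge insertions are direct; the reachability and path-avoidance checks in the cleanup reduce to shortest-path and connectivity queries solvable by repeated matrix squaring in polylogarithmic depth on polynomially many processors. The main obstacle is the global consistency in the previous paragraph: exhibiting a choice of predecessors such that simultaneously pruning the incoming edges of all $u^z$'s across all offending $(v,u^z)$ pairs leaves at least one intact $s^e$-$t^e$ walk. I expect the cleanest formalization to be via a single canonical shortest-alternating-walk function on $V$ that all parallel modifications tacitly respect, so that the invariant holds after one parallel pass rather than requiring iteration.
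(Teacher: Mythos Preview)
Your outline matches the paper's---build $G_L$, observe it encodes all shortest alternating walks, prune so that only simple paths survive---but you stall at what you call the ``main obstacle'': making the parallel edge-removals at the various $(v,u^z)$ pairs globally consistent via a canonical (e.g.\ lex-min) avoiding walk chosen in advance. You do not actually carry this out, and it is not clear the lex-min rule even works, since ``avoiding'' is relative to $v$ and different $v$'s impose incompatible constraints on the same $u^z$.

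The paper sidesteps this entirely with a two-line contradiction. First, each individual removal at $u^z$ keeps one incoming edge lying on \emph{some} $s^e$--$u^z$ path avoiding the relevant $v^x$, so reachability from $s^e$ is never lost. Second, if a surviving path $q$ still visits $v^e$ before $v^o$, take any $s^e$--$v^o$ path $p$ in $G_L$ avoiding $v^e$ (one exists because $G$ is factor critical and $G_L$ contains all shortest alternating paths); the edge of $q$ entering the first vertex $u^z$ that $q$ shares with $p$ after $v^e$ must have been deleted when the pair $(v,u^z)$ was processed. No coordination among the parallel removals is required---whatever avoiding path each one happened to retain, the argument goes through. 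Replace your canonical-choice machinery with this first-shared-vertex contradiction and the proof collapses to a few lines.
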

\begin{proof}
We first observe that the removal of edges entering $u^z$ from $G_L$ does not affect reachability from $s$, as
the path $p$ from $s^e$ to $u^z$ is left in the graph. Now, by contradiction, assume that at the end
of the algorithm there is a path $q$ in $G_L$ that contains both $v^e$ and $v^o$ for some $v\in V$.
Without loss of generality assume $v^e$ precedes $v^o$. As $G_L$ contains all simple alternating paths, there exists an $s^e$-$v^o$ path
$p$ avoiding $v^e$. Hence, the edge of $q$ entering the first shared vertex with $p$ was removed by the algorithm -- see Figure~\ref{fig:single-path}~b).\footnote{The
graph constructed in this algorithm can be seen as an extended version of generalized shortest path tree~\cite{6686149}. Alternatively, such
tree could be constructed using Algorithm 5 from~\cite{CGSa}. This, however, would result in a slightly more complicated solution.}
\end{proof}

The next algorithm constructs a perfect matching $M$ that respects a family of blossoms.
Here, we explicitly consider $\cB$ as a tree $T_{\cB}$, i.e., the vertices of $T_{\cB}$ are sets in $\cB$
whereas edges in $T_{\cB}$ represent child parent relationship. See Figure~\ref{fig:recurse-tree} for an example
and an illustration of recursion. It calls Algorithm~\ref{algorithm-degree-reduction} that handles the case without blossoms.
The next theorem argues about the correctness of this algorithm.
\begin{lemma}
\label{lemma-matching-recurse}
Algorithm~\ref{algorithm-matching-recurse} finds a perfect matching $M$ respecting $\cB$.
The recursion depth of the internal calls of the algorithm to itself is $O(\log n)$.
\end{lemma}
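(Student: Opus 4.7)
The plan is to induct on the size of the laminar tree $T_{\cB}$. In the base case $\cB=\emptyset$ the algorithm falls through to Algorithm~\ref{algorithm-degree-reduction}, whose correctness has already been established; for the inductive step it uses the NC tree-separator subroutine from the preliminaries to compute a vertex $B^*$ of $T_{\cB}$ such that every connected component of $T_{\cB}-B^*$ has size at most $|T_{\cB}|/2$.

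For correctness I would use the fact that, by complementary slackness together with Lemma~\ref{lem:critical}, any perfect matching respecting $\cB$ must use exactly one edge of $\delta(B^*)$. The algorithm therefore picks an allowed edge $uv\in\delta(B^*)$ with $v\in B^*$ and $u\notin B^*$, whose existence and detection are handled by Corollary~\ref{corollary-matching-weights}. It then launches two parallel recursive calls: one \emph{inside}, on the graph $G_{B^*}$ (the maximal sub-blossoms of $B^*$ contracted) together with the sub-family $\cB_{\mathrm{in}}:=\{B\in\cB:B\subsetneq B^*\}$, to produce an almost perfect matching missing $v$; and one \emph{outside}, on the graph obtained by treating $B^*$ as a single contracted super-vertex, together with $\cB_{\mathrm{out}}:=\cB\setminus(\cB_{\mathrm{in}}\cup\{B^*\})$, to produce an almost perfect matching missing $u$. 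Gluing the two outputs via $uv$ yields a perfect matching of $G$ that by construction crosses every blossom of $\cB$ exactly once. The key technical point is forcing the inside recursion to miss the prescribed vertex $v$: since $G_{B^*}$ is factor critical by Lemma~\ref{lem:critical}, the recursive call is free to return any almost perfect matching $M_s$ missing some vertex $s$, and Lemma~\ref{lemma-fix-up} then rotates $M_s$ into one missing $v$ by flipping a shortest alternating $s$--$v$ path.

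The depth bound follows immediately from the separator property: each recursive subinstance has laminar tree of size at most $|T_{\cB}|/2$, and since $|T_{\cB}|\le n$, after $O(\log n)$ levels of self-recursion both branches reach the base case. The step I expect to be the main obstacle is verifying that the alternating-path surgery of Lemma~\ref{lemma-fix-up} inside $B^*$ preserves the respect condition for every sub-blossom $B'\in\cB_{\mathrm{in}}$; this has to be argued by tracking the parity with which a simple alternating path crosses $\delta(B')$ and invoking laminarity to ensure that $|M\cap\delta(B')|=1$ is maintained after the flip, together with a symmetric check that the outside call's contraction of $B^*$ does not accidentally create new crossings of ancestor blossoms.
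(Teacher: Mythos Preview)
Your proposal mis-describes the recursion of Algorithm~\ref{algorithm-matching-recurse} and, as a consequence, your depth bound is incorrect.

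First, Algorithm~\ref{algorithm-matching-recurse} does \emph{not} make a two-way ``inside/outside'' split across $\delta(B)$, nor does it pre-select an allowed edge $uv\in\delta(B)$. It makes one recursive call on $G_B$, the whole graph with each \emph{child} of $B$ contracted, together with the tree $T_B$ obtained by deleting the subtrees rooted at the children of $B$; and, in parallel, one recursive call for \emph{each} child $C$ of $B$, on $G_C$ (everything outside $C$ contracted to a single vertex $v_C$) together with the subtree $T_C$. The edge that will cross into $C$ is determined \emph{a posteriori} by the matching $M_B$, and the matching $M_C$ is then rotated by the alternating-path procedure so that the correct vertex of $C$ is left free. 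Your description of the ``inside'' call is also internally inconsistent: you contract the maximal sub-blossoms of $B^*$ but still hand the recursion the family $\cB_{\mathrm{in}}=\{B\in\cB:B\subsetneq B^*\}$, which ceases to make sense once those sets have been contracted away.

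Second, and this is the concrete failure, your claim that ``each recursive subinstance has laminar tree of size at most $|T_{\cB}|/2$'' is false for your split. The vertex-separator property guarantees that every \emph{connected component} of $T_{\cB}-B^*$ has size at most $|T_{\cB}|/2$, but your $\cB_{\mathrm{in}}$ is the \emph{union} of all child-side components, and this union can have size $|T_{\cB}|-1$ (take $B^*$ to be the root of a star). Hence the two-way split does not give logarithmic depth. The paper's split avoids exactly this: each $T_C$ is a single component of $T_{\cB}-B$ and therefore has size at most $|T_{\cB}|/2$, while $T_B$ consists of $B$ together with the single parent-side component, so $|T_B|\le 1+|T_{\cB}|/2$. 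That is what yields the $O(\log n)$ bound on self-recursion.

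Your final paragraph correctly identifies a subtlety---whether the alternating-path rotation preserves the respect condition for sub-blossoms---but you never resolve it, and in any case it cannot be attached to the two-call scheme you outlined.
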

\begin{proof}
We need to argue that $M_B$ can be extended to a perfect matching in the whole graph. Consider a child blossom $C$ of $B$. By Lemma~\ref{lemma-critical-dual}
we know that $C$ is factor-critical, so there exists almost perfect matching $M'_{C}$ in $C$ that together with $M_B$ forms a perfect matching.
This matching differs from $M_C$ by a single alternating path. Moreover, note that after contraction of a given blossom all nonintersecting blossoms remain
blossoms and the graph remains planar, so we can continue recursing on subtrees of $T_{\cB}$. As we recurse on a vertex separator of a tree $T_{\cB}$ the size
of the subtrees decreases by a constant factor.
\end{proof}
Lemma~\ref{lemma-matching-recurse} leads to the correctness of Algorithm~\ref{algorithm-matching} and Algorithm~\ref{algorithm-degree-reduction} as well.
\begin{theorem}
Algorithm~\ref{algorithm-degree-reduction} finds a perfect matching in $G$.
\end{theorem}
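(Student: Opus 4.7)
The plan is to prove correctness by induction on $|V(G)|$, using the mutual recursion between Algorithm~\ref{algorithm-degree-reduction} and Algorithm~\ref{algorithm-matching} (through Algorithm~\ref{algorithm-matching-recurse}). The base case is trivial (an empty graph or a single edge), and the inductive step rests on showing that each layer of the algorithm hands off to a strictly smaller instance that by induction is solved correctly.

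First I would verify that the preprocessing steps in Algorithm~\ref{algorithm-degree-reduction} preserve the perfect matching. By Corollary~\ref{corollary-matching-weights}, non-allowed edges lie in no perfect matching and can be safely deleted; any edge incident to a degree-$1$ vertex is forced into every perfect matching and therefore can be committed to $M$ as an independent edge; and the contraction of odd-length degree-$2$ paths to a single edge, and of even-length paths to a single vertex, produces a graph whose perfect matchings are in bijection with those of $G$, the inverse of the bijection being the final ``extend $M$ on paths of degree $2$ vertices'' step. The preceding lemma already guarantees that after this preprocessing every connected component $C$ is simplified and $2$-connected, and by construction $C$ still admits a perfect matching. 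Hence it suffices to show that Algorithm~\ref{algorithm-matching} finds a perfect matching on each such $C$. For this, Lemma~\ref{lemma-degree-3} gives $\Omega(|C|)$ faces, Lemma~\ref{lemma-double-faces-x} extracts $\Omega(|C|)$ edge-disjoint even semi-simple cycles, and the weighting chosen on line~\ref{algorithm-matching-f} is exactly that of Lemma~\ref{lemma-even-cycle}. Removing non-allowed edges is again matching-preserving, Lemma~\ref{lemma-critical-dual} outputs the blossoms of the critical dual, and by Lemma~\ref{lemma-matching-recurse} Algorithm~\ref{algorithm-matching-recurse} returns a perfect matching of $C$ respecting those blossoms; by linear-programming duality this is in fact a minimum perfect matching for the assigned weights, and in particular a perfect matching.

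The main obstacle is making the mutual recursion bottom out at a controllable depth. Algorithm~\ref{algorithm-matching-recurse} recurses on subgraphs induced inside and outside each blossom, eventually calling back to Algorithm~\ref{algorithm-degree-reduction} on subgraphs that no longer carry blossoms with respect to the current weighting. The progress measure is provided by Lemmas~\ref{lemma-even-cycle} and~\ref{lemma-two-sides}: on each of the $\Omega(|C|)$ edge-disjoint even cycles either some edge becomes non-allowed (and is deleted at this level) or the cycle contains at least one edge strictly inside and one strictly outside some blossom, so any single region obtained by peeling to ``inside'' or ``outside'' of a blossom loses a constant fraction of the edges of $C$. Consequently every call to Algorithm~\ref{algorithm-degree-reduction} either shrinks $|V|$ by a constant factor or passes control to strictly smaller subproblems that do, the induction closes, and the algorithm returns a perfect matching of $G$.
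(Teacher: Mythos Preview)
Your proposal is essentially correct and follows the same line of reasoning the paper develops across its lemmas; in fact the paper itself states this theorem without proof, treating it as an immediate consequence of Lemma~\ref{lemma-matching-recurse} (``Lemma~\ref{lemma-matching-recurse} leads to the correctness of Algorithm~\ref{algorithm-matching} and Algorithm~\ref{algorithm-degree-reduction} as well''). Your write-up simply spells out the inductive structure and the role of each supporting lemma that the paper leaves implicit.

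Two small comments. First, your induction parameter should be the number of edges rather than $|V(G)|$: the progress measure in the paper (Lemma~\ref{lemma-recursion-depth}) is stated for edges, and when Algorithm~\ref{algorithm-matching-recurse} recurses into a region the contracted super-vertices can keep $|V|$ from shrinking in the way you want, while $|E|$ is what demonstrably drops by a constant factor. Second, the constant-factor shrinkage you invoke is really a running-time statement; for bare correctness (termination plus validity of the output) it suffices that the subinstances are strictly smaller, which they are once you phrase the induction over $|E|$. The appeal to LP duality at the end is harmless but unnecessary here, since the theorem only asks for \emph{a} perfect matching, not a minimum one.
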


\begin{algorithm}[H]
\begin{algorithmic}[1]
\If{$\cB=\emptyset$}
\State{Return matching $M$ computed by Algorithm~\ref{algorithm-degree-reduction} on $G$.}
\EndIf
\State{Let $B$ be a vertex separator of $T_{\cB}$.}
\Begin\Comment{In parallel with the next loop}
\State{Let $G_B$ be the graph with all children of $B$ contracted.}
\State{Let $T_B$ be $T_{\cB}$ with children of $B$ removed.}
\State{Recurse on $G_B,T_B$ to obtain matching $M_B$.}
\End
\ForAll{children $C$ of $B$ in $T_{\cB}$}\Comment{In parallel}
\State{Let $G_C$ be $G$ with all vertices not in $C$ contracted to a vertex denoted by $v_C$.}
\State{Let $T_C$ be subtree of $T_{\cB}$ rooted at $C$.}
\State{Recurse on $G_C,T_C$ to obtain matching $M_C$.}
\EndFor
\ForAll{children $C$ of $B$ in $T_{\cB}$}\Comment{In parallel}
\State{Remove from $G_C$ vertex $v_C$ and let $u_C$ be the resulting free vertex in $G_C$.}
\State{Let $e_B$ be the edge of $M_B$ incident to $G_C$.}
\State{Let $v_B$ be the endpoint of $e_B$ in $G_C$ after expanding $G_C$.}
\State{Apply to $M_C$ an alternating path from $u_C$ to $v_B$ found using Algorithm~\ref{algorithm-fix-up}}
\EndFor
\State{Return $M_B\cup \bigcup_{C \textrm{ child of }B} M_C$.}
\end{algorithmic}
\caption{Computes a perfect matching $M$ of $G$ respecting blossoms $\cB$.}
\label{algorithm-matching-recurse}
\end{algorithm}

We now can turn our attention to arguing about the running time of our algorithm. In this section we are going to quantify progress
related to perturbing weights on each semi-simple cycle.

%


\begin{lemma}
\label{lemma-recursion-depth}
The number of edges in $G$ decreases by a constant factor when recursing to Algorithm~\ref{algorithm-degree-reduction} from Algorithm~\ref{algorithm-matching} via Algorithm~\ref{algorithm-matching-recurse}.
\end{lemma}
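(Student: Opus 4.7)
The plan is to exhibit $\Omega(n)$ distinct edges of $G$ that are absent from \emph{every} blossom-free subgraph $H$ on which Algorithm~\ref{algorithm-matching-recurse} ultimately invokes Algorithm~\ref{algorithm-degree-reduction}. The graph $G$ entering Algorithm~\ref{algorithm-matching} is $2$-connected, simplified, and admits a perfect matching, so Lemma~\ref{lemma-degree-3} gives it $\Omega(n)$ faces, while planarity gives $|E(G)|=O(n)$. Together these reduce the claim to producing an $\Omega(n)$ count of missing edges.

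The witnesses come from the family $\mathcal{F}$ of $\Omega(n)$ edge-disjoint even semi-simple cycles that Lemma~\ref{lemma-double-faces-x} delivers on line~1 of Algorithm~\ref{algorithm-matching}. After the weight assignment and the deletion of not-allowed edges on line~5, Lemma~\ref{lemma-even-cycle} forces each $C\in\mathcal{F}$ into one of two types: either some $a_C\in C$ is deleted right then, or some blossom $B_C\in\cB$ of the critical dual crosses $C$. In the latter case Lemma~\ref{lemma-two-sides} additionally supplies an inside edge $e_1(C)\in E(B_C)$ and an outside edge $e_2(C)\in E(G\setminus B_C)$, both lying on $C$. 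I would take $a_C$, or whichever of $e_1(C),e_2(C)$ the case analysis below selects, as the witness for $C$; edge-disjointness of $\mathcal{F}$ then guarantees the witnesses are pairwise distinct and $\Omega(n)$ in number.

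To pick the right witness, next I describe $H$. Algorithm~\ref{algorithm-matching-recurse} bottoms out when $\cB=\emptyset$, and along the way only contracts the immediate children of a separator blossom or contracts the complement of a chosen child; consequently $H$ is either (i) the interior of some $B^\star\in\cB$ with its immediate children contracted to single vertices, or (ii) the outer region with all maximal blossoms of $\cB$ contracted. In case (i), laminarity places $B_C$ in one of four positions relative to $B^\star$. If $B_C\supseteq B^\star$, then both endpoints of $e_2(C)$ lie outside $B_C$, hence outside $B^\star$, so $e_2(C)\notin E(H)$. If $B_C\subsetneq B^\star$, then $B_C$ is contained in some immediate child $C_j$ of $B^\star$, so $e_1(C)$ is internal to $C_j$ and collapses to a self-loop at the contracted vertex $v_{C_j}$. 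If $B_C$ is disjoint from $B^\star$, then both endpoints of $e_1(C)$ lie in $B_C$ and hence outside $V(H)$. In case (ii), every $B_C\in\cB$ is contained in some maximal blossom $B_{\max}$, so $e_1(C)$ again collapses to a self-loop at $v_{B_{\max}}$. In every situation at least one edge of $C$ is absent from $H$, so $|E(G)|-|E(H)|\geq|\mathcal{F}|=\Omega(n)=\Omega(|E(G)|)$, which is the desired constant-factor drop.

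The subtle step is the subcase $B_C\supsetneq B^\star$: I need an edge of $C$ that lies strictly outside $B_C$, not merely on its boundary $\delta(B_C)$, so that this edge is also guaranteed to escape $B^\star$ and vanish from $H$. Lemma~\ref{lemma-two-sides} is precisely what supplies such a strictly-outside edge $e_2(C)$, rather than the weaker Lemma~\ref{lemma-even-cycle}, which only certifies a boundary-crossing edge. The rest is routine bookkeeping about which vertices and edges of $G$ survive the contractions performed by Algorithm~\ref{algorithm-matching-recurse}, together with the planar edge bound $|E(G)|=O(n)$ used to turn an additive loss of $\Omega(n)$ into a multiplicative constant-factor decrease.
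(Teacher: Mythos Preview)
Your proof is correct and follows essentially the same approach as the paper. Both arguments identify the base-case graphs $H$ of Algorithm~\ref{algorithm-matching-recurse} with the ``regions'' that the laminar family $\cB$ carves the plane into, and both show that each of the $\Omega(n)$ semi-simple cycles contributes at least one edge (either a deleted not-allowed edge, or one of the inside/outside witnesses from Lemma~\ref{lemma-two-sides}) lying outside any given region; the constant-factor drop then comes from $|E(G)|\le 3n-6$. Your case analysis on the relative position of $B_C$ versus $B^\star$ is simply an explicit unfolding of the paper's one-line claim that ``the outside of each region $R\in\mathcal{R}$ contains at least one edge from $e_1$ and $e_2$''; in particular your observation that the $B_C\supseteq B^\star$ subcase is exactly where the strictly-outside edge of Lemma~\ref{lemma-two-sides} (as opposed to a mere $\delta(B_C)$ edge) is needed is a point the paper leaves implicit.
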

\begin{proof}
By Lemma~\ref{lemma-double-faces-x} and Lemma~\ref{lemma-degree-3} after graph simplification we have $\Omega(n)$ even semi-simple cycles in the graph $G$.
Now, by Lemma~\ref{lemma-even-cycle} for each semi-simple cycle, either one edge becomes not-allowed, or there exists a blossom $B\in \cB$ that intersects
this cycle. Hence, either $\Omega(n)$ edges become not-allowed, or we have $\Omega(n)$ cycles intersected
by some blossom. Lemma~\ref{lemma-two-sides} implies that for each such intersection there exists an edge $e_1$ inside $B$ and an edge $e_2$ outside
$B$ -- see Figure~\ref{fig:blossoms}~a). Consider a plane embedding of $G$ and draw boundaries of each blossom $\delta(B)$ in this plane, thus dividing the plane and graph $G$ into regions $\mathcal{R}$.
The outside of each region $R\in \mathcal{R}$ contains at least one edge from $e_1$ and $e_2$, i.e., there are $\Omega(n)$ edges not incident to a region -- see Figure~\ref{fig:blossoms}~b).
We note that when we recurse to Algorithm~\ref{algorithm-degree-reduction} from Algorithm~\ref{algorithm-matching-recurse}, we recurse
onto some region $R\in \mathcal{R}$ with parts of $G$ not in $R$ contracted to vertices. This graph contains only edges incident to a region $R$, so
it does not contain $\Omega(n)$ edges. By Euler's formula the total numer of edges is $\le 3n-6$, so when recursing on each region it decreases by a constant factor.
\end{proof}

By Lemma~\ref{lemma-matching-recurse} and~\ref{lemma-recursion-depth} the recursion depth in Algorithm~\ref{algorithm-matching} is $O(\log^2 n)$ thus:
\begin{corollary}
A perfect matching in a planar graph can be computed in NC.
\end{corollary}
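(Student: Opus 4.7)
The plan is to combine the two recursion-depth bounds developed in the paper into a single polylogarithmic bound, and then to verify that every primitive invoked along the way lies in NC. First I would observe that the algorithm alternates between two nested recursions: Algorithm~\ref{algorithm-matching-recurse} descends on the laminar tree $T_{\cB}$ by repeatedly picking a vertex separator, which by Lemma~\ref{lemma-matching-recurse} gives recursion depth $O(\log n)$ per invocation; at its leaves (where $\cB=\emptyset$) it bottoms out into Algorithm~\ref{algorithm-degree-reduction}, which via Algorithm~\ref{algorithm-matching} may recurse back into Algorithm~\ref{algorithm-matching-recurse} on smaller contracted subgraphs. By Lemma~\ref{lemma-recursion-depth} each such outer call shrinks the number of edges by a constant factor, so the outer recursion has depth $O(\log n)$ as well. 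Multiplying the two nested depths yields a total recursion depth of $O(\log^2 n)$.

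Next I would check that every step executed at a recursion node is implementable in NC. Removing non-allowed edges uses Corollary~\ref{corollary-matching-weights}; simplifying the graph via contractions of degree-$2$ paths and extraction of independent matched edges uses the standard connectivity and maximal-independent-set primitives listed in the Preliminaries; the set of $\Omega(f)$ edge-disjoint even semi-simple cycles used to perturb weights is produced by Lemma~\ref{lemma-double-faces-x}; the blossoms of the critical dual with respect to these weights are produced by Lemma~\ref{lemma-critical-dual}; and the alternating paths needed to stitch the almost-perfect matchings $M_C$ together with $M_B$ are produced by Algorithm~\ref{algorithm-fix-up}, which is in NC by Lemma~\ref{lemma-fix-up}. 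Since each of these primitives takes polylogarithmic depth with polynomially many processors, and the nested recursion contributes only a polylogarithmic multiplicative overhead, the whole algorithm stays within NC.

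The main obstacle I expect is not the arithmetic but the bookkeeping at the interface between the two recursions: one must check that when Algorithm~\ref{algorithm-degree-reduction} is called inside a blossom-respecting computation, the contracted region $R$ is still a well-defined $2$-connected planar instance with an even number of vertices, so that Lemma~\ref{lemma-degree-3} and Lemma~\ref{lemma-double-faces-x} keep producing $\Omega(|E(R)|)$ edge-disjoint even semi-simple cycles and the progress measure of Lemma~\ref{lemma-recursion-depth} really does apply. I would handle this by noting that the simplification step of Algorithm~\ref{algorithm-degree-reduction} enforces $2$-connectivity on each component (as in the proof of its correctness), that contraction of blossoms preserves planarity, and that Lemma~\ref{lemma-critical-dual} guarantees that each contracted blossom is factor critical so the inside-blossom and outside-blossom recursions both see instances with the required parity. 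With these invariants in place, the $O(\log^2 n)$ bound on recursion depth combines with the NC-ness of every primitive to give an NC algorithm for computing a perfect matching in a planar graph.
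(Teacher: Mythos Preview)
Your proposal is correct and follows essentially the same approach as the paper: the paper's justification of the corollary is a single sentence that combines Lemma~\ref{lemma-matching-recurse} (inner recursion depth $O(\log n)$ on the laminar tree) with Lemma~\ref{lemma-recursion-depth} (outer recursion shrinks edges by a constant factor, hence depth $O(\log n)$) to obtain total recursion depth $O(\log^2 n)$. Your write-up expands this with explicit checks that each primitive is in NC and that the invariants (planarity, $2$-connectivity after simplification, factor-criticality of blossoms) persist across recursive calls, which the paper leaves implicit.
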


\section{Minimum Perfect Matching}
\label{section-minimum-perfect}
So far we have assumed that the graph is unweighed and we have coped with the problem of constructing any perfect matching.
However, our approach is versatile enough to handle weighted case in a rather straightforward way using Algorithm~\ref{algorithm-weighted-matching}.

\begin{algorithm}
\begin{algorithmic}
\State{Remove all not allowed edges from $G$.}
\State{Compute blossoms of a critical dual $\cB$ with respect to $w$ using Lemma~\ref{lemma-critical-dual}.}
\State{Find a perfect matching $M$ respecting $\cB$ using Algorithm~\ref{algorithm-matching-recurse}.}
\State{Return $M$.}
\end{algorithmic}
\caption{Finds a minimum perfect matching in $G=(V,E)$ with respect to the edge weight function $w:E\to [0, \poly(n)]$.}
\label{algorithm-weighted-matching}
\end{algorithm}
Hence, we obtain the following.
\begin{lemma}
\label{lemma-minimum-perfect-matching}
A minimum perfect matching in a planar graph $G=(V,E)$ with edge weight function $w:E\to [0, \poly(n)]$ can be computed in NC.
\end{lemma}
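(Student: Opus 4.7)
The plan is to verify that Algorithm~\ref{algorithm-weighted-matching} is a direct primal--dual realization of Edmonds' scheme and that each of its three lines can be executed in NC.

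For correctness I would argue by complementary slackness against the primal--dual pair written in Section~2. After Step~1 every surviving edge is allowed, hence tight (the allowed edges form a subset of the tight edges, as noted after Corollary~\ref{corollary-matching-weights}), and at least one minimum perfect matching uses only such edges. Step~2 invokes Lemma~\ref{lemma-critical-dual} to produce the critical dual $\pi$ together with its laminar blossom family $\cB$. Any perfect matching $M$ of the reduced graph that respects $\cB$, meaning $|M\cap\delta(B)|=1$ for every $B\in\cB$, then satisfies all complementary-slackness conditions simultaneously: $x_e>0$ only for tight edges, the degree constraints hold because $M$ is perfect, and the blossom cut-constraints hold because $M$ respects $\cB$. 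Hence $M$ is a minimum weight perfect matching, and Step~3 delivers exactly such an $M$ by Lemma~\ref{lemma-matching-recurse}.

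For the NC claim I would simply chase the recursion depth. Step~1 is one parallel application of Corollary~\ref{corollary-matching-weights} and Step~2 is one invocation of Lemma~\ref{lemma-critical-dual}; both run in NC. Step~3 walks down the blossom tree $T_{\cB}$, always splitting at a vertex separator, so its own depth is $O(\log|\cB|)=O(\log n)$. Whenever the recursion reaches a leaf sub-instance with an empty blossom family, it hands control to Algorithm~\ref{algorithm-degree-reduction}, which invokes the unweighted routine Algorithm~\ref{algorithm-matching}; this in turn has recursion depth $O(\log^2 n)$ by Lemma~\ref{lemma-recursion-depth}. Composing, the overall depth of Algorithm~\ref{algorithm-weighted-matching} remains $O(\log^2 n)$ with only polylog-time parallel work per round, placing the procedure in NC.

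The main subtlety I expect is checking that the subproblems spawned inside Algorithm~\ref{algorithm-matching-recurse} are genuinely well-posed for the unweighted subroutine. After contracting the children of a blossom $B$ to form $G_B$, or contracting the complement of a child $C$ to a single vertex $v_C$ to form $G_C$, the resulting graphs must remain planar, still carry (almost) perfect matchings, and the stitching step, which patches $M_B$ together with the $M_C$'s via alternating paths produced by Algorithm~\ref{algorithm-fix-up}, must yield a true perfect matching of the original $G$. Factor-criticality of each contracted blossom supplied by Lemma~\ref{lem:critical}, together with the combinatorial guarantees already recorded in Lemma~\ref{lemma-matching-recurse}, cover these points, so the remaining work amounts to careful bookkeeping rather than new mathematics.
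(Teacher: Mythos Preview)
Your proposal is correct and follows essentially the same approach as the paper: the paper's own proof is a two-sentence appeal to complementary slackness (allowed edges are tight, and a perfect matching on tight edges that respects all blossoms is optimal), leaving the NC bound implicit in the previously established lemmas. Your write-up simply unpacks this more explicitly, adding the recursion-depth accounting that the paper relegates to the sentence following Lemma~\ref{lemma-recursion-depth}; no new ideas are needed beyond what you have sketched.
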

\begin{proof}
Observe that all allowed edges in $G$ need to be tight. By complimentary slackness conditions a perfect matching that is composed out of allowed edges and that respects all
blossoms is a minimum perfect matching.
\end{proof}


\section{Minimum $f$-Factors}
\label{section-f-factor}
Let $G=(V,E)$ be a multi-graph, i.e., we allow parallel edges as well as self-loops.
 For a function $f:V\to [1,\poly(n)]$, an {\em $f$-factor} is a set of edges $F\subseteq E$
such that $\deg_F(v)=f(v)$ for every $v\in V$. Without loss of generality  we assume that
any edge $uv$ has multiplicity at most $\min \{f(u),f(v)\}$.
A {\em minimum $f$-factor} is an $f$-factor $F$ with minimum weight $w(F)$.

The usual approach to $f$-factor problems is by vertex-splitting~\cite{Schrijver,hal83,anstee}, but
these reductions do not preserve planarity. Here, we provide planarity
preserving vertex splitting. In particular, we show how to
replace each vertex $v\in V$ with a planar gadget $\tilde{G}$, such
that perfect matchings in the resulting graph correspond to $f$-factors
in the original graph.

The gadget $\tilde{G}^{d,f}$ is parameterized by the degree $d$ of a vertex
and its $f$ value. In order to work, the gadget $\tilde{G}^{d,f}$ needs
to have $d$ interface vertices $I=\{i_1,,,i_d\}$ ordered in a circular order on
its outside face. We require that for every subset of $I'\subseteq I$ the
graph $G\setminus I'$ has a perfect matching when $|I'|=f$, whereas
$G\setminus I'$ has no perfect matching when $|I'|\neq f$.

The construction of $\tilde{G}^{d,f}$ is done recursively -- see Figure~\ref{fig:f-factors}~a). We start the recursion for $d=f$, and let $\tilde{G}^{d,f}$
be a set of $d$ independent vertices, where all of them form the interface. The graph $\tilde{G}^{d+1,f}$ is defined
from $\tilde{G}^{d,f}$ with interface $I=\{i_1,\ldots,i_d\}$ in the following way:
\begin{itemize}
\item add $d+1$ vertices $v_1,\ldots,v_{d+1}$ and connect $i_j$ to both $v_j$ and $v_{j+1}$ for all $j\in [1, d]$,
\item add $d+1$ vertices $I'=\{i'_1,\ldots,i'_{d+1}\}$ and connect $i'_j$ with $v_j$ for all $j\in  [1, d]$.
\item $I'$ is the interface of $\tilde{G}^{d+1,f}$.
\end{itemize}

We note that $\tilde{G}^{d,f}$ has $O((d-f)d)=O(d^2)$ vertices
and edges. Hence, by exchanging each vertex with such gadget we obtain a graph of polynomial size. Moreover,
if the graph is weighted, we set weights of edges in all gadgets to $0$. This way the weight of the
resulting perfect matching is equal to the weight of the $f$-factor.

\begin{lemma}
\label{lemma-minimum-f-factor}
Let $G=(V,E)$ be a planar multigraph with edge weight function $w:E\to [0, \poly(n)]$. For a function $f:V\to [1,\poly(n)]$, minimum $f$-factor can be computed in NC.
\end{lemma}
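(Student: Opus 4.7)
The plan is to apply the planar vertex-splitting reduction already set up: replace every vertex $v \in V$ by the gadget $\tilde{G}^{\deg(v),f(v)}$, identifying its $\deg(v)$ interface vertices, in the prescribed circular order, with the edges of $G$ incident to $v$ in the cyclic order given by the planar embedding. Call the resulting planar multigraph $H$, assign weight $0$ to every gadget edge, and keep the original weights on the edges of $G$. I would then invoke Lemma~\ref{lemma-minimum-perfect-matching} to compute a minimum perfect matching $M$ of $H$ in NC; the edges of $M$ that are original edges of $G$ form the desired minimum $f$-factor $F$, and because gadget edges contribute $0$, we have $w(F) = w(M)$.

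The central technical step is verifying the characteristic property of $\tilde{G}^{d,f}$ by induction on $d - f$. In the base case $d = f$ the gadget is an independent set of $f$ interface vertices, which has a perfect matching exactly when all of them are removed. For the inductive step, given $I'' \subseteq I' = \{i'_1,\ldots,i'_{d+1}\}$ with $|I''| = f$, I would exhibit a perfect matching in $\tilde{G}^{d+1,f} \setminus I''$ by first observing that every $i'_j \notin I''$ has only $v_j$ as a neighbour and so must be matched to it; the remaining $f$ vertices $v_j$ (those whose $i'_j$ lies in $I''$) then have to be matched to either $i_{j-1}$ or $i_j$, and because each such $v_j$ has two contiguous $i$-choices along the zigzag structure, one can always pick a system of distinct representatives so that exactly $f$ of $i_1,\ldots,i_d$ are consumed, invoking the inductive hypothesis on $\tilde{G}^{d,f}$ to extend the matching. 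For the converse, the same rigidity forces $d+1-|I''|$ of the $v_j$'s to be matched upwards, leaving $|I''|$ of them demanding $i$-partners; this requires $|I''|$ interface vertices of $\tilde{G}^{d,f}$ to be removed, and by induction that succeeds iff $|I''| = f$, ruling out perfect matchings when $|I''| \neq f$.

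Next I would check planarity and polynomial size. The recursive construction places the fresh interface $I'$ on the outer face in the prescribed cyclic order, so a plane drawing of $\tilde{G}^{d+1,f}$ is obtained by drawing $\tilde{G}^{d,f}$ inside a disk and attaching the new "ladder rungs" $v_j, i'_j$ along the outside; consequently, gluing each gadget into the plane in place of its vertex, with interface vertices aligned with the circular ordering of incident edges in the embedding of $G$, preserves planarity of $H$. Since $|\tilde{G}^{d,f}| = O(d^2)$, the total size of $H$ is polynomial in $n$ (using $\deg(v) \le n$ and $f(v) \le \mathrm{poly}(n)$). The bijective correspondence between perfect matchings of $H$ and $f$-factors of $G$ is weight preserving in both directions: any $f$-factor $F$ extends to a perfect matching of $H$ by completing each gadget's interior via the "yes" direction of the inductive property (with $|I''|$ equal to the number of edges of $F$ at that vertex, which is $f$), and conversely any perfect matching of $H$ restricts to an $f$-factor by reading which interface vertices are matched through original edges. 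Minimality transfers in both directions, so the NC algorithm for minimum perfect matching from Lemma~\ref{lemma-minimum-perfect-matching} yields the claim.

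The main obstacle I anticipate is precisely the inductive gadget analysis, and in particular making the rigidity propagation fully rigorous for the "no" direction when $|I''| \neq f$: this is exactly where the planar ladder has to stand in for the non-planar complete bipartite graph used in the standard vertex-splitting reduction, and one has to be careful that all forced choices along the $v_j$--$i'_j$ pairs chain together to produce the correct cardinality mismatch in the inductive call. Everything else (embedding, size, weight transfer, and the NC invocation) is routine once the combinatorial property of $\tilde{G}^{d,f}$ is in hand.
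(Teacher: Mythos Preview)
Your proposal is correct and follows exactly the paper's approach: replace each vertex by the gadget $\tilde{G}^{\deg(v),f(v)}$, observe that the result is planar of polynomial size, and invoke Lemma~\ref{lemma-minimum-perfect-matching}. The paper's own proof is three sentences and does not spell out the inductive verification of the gadget's defining property; your added induction (forcing $i'_j\notin I''$ to absorb $v_j$, then finding a system of distinct representatives among the $i_k$ for the remaining $|I''|$ vertices $v_j$, which exists precisely because $|I''|\le d$) is a legitimate and correct expansion of what the paper leaves implicit.
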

\begin{proof}
Take graph $G$ and replace each vertex $v\in V$ with gadget $\tilde{G}^{\deg(v),f(v)}$ connecting incident edges to its interfaces.
The resulting graph has $\sum_{v\in V} O(\deg(v)^2) = O(n^3)$ vertices and the minimum perfect matching in it corresponds to minimum $f$-factor
in $G$. Applying Lemma~\ref{lemma-minimum-perfect-matching} to this graph finishes the proof.
\end{proof}
We note that maximum $f$-factor (or prefect matching) can be computed by using redefined weight function $w'(e)=-w(e)+\max_{f\in E}w(f)$.


\section{Maximum Bipartite Matching}
\label{section-maximum-bipartite}
Let $G=(V,E)$ be a bipartite planar graph with edge weight function $w:E\to [0, \poly(n)]$.
Algorithm~\ref{algorithm-maximum-bipartite-matching} computes a maximum matching in graph $G$, i.e., a matching of maximum total weight.
A $2$-factor is an $f$-factor when $f(v)=2$ for all $v\in V$.
\begin{algorithm}
\begin{algorithmic}
\State{Create a multigraph $G'$ from $G$ by taking two copies of each edge $E$.}
\State{Add a self loop $vv$ of weight $0$ for each $v\in V$.}
\State{Compute a maximum $2$-factor $F$ in $G'$.}
\State{Remove all self loops from $F$.}
\ForAll{even length cycles $C$ in $F$}\Comment{In parallel}
\State{Remove every second edge from $C$.}
\EndFor
\ForAll{edges $e \in E$ taken twice in $F$}\Comment{In parallel}
\State{Remove one copy of $e$ from $F$.}
\EndFor
\State{Return $F$.}
\end{algorithmic}
\caption{Finds a maximum matching in a bipartite graph $G=(V,E)$ with edge weight function $w:E\to [0,\poly(n)]$.}
\label{algorithm-maximum-bipartite-matching}
\end{algorithm}

Keep in mind that $G$ is bipartite, so $F$ cannot contain odd length cycles (besides the self-loops), so processing them is not needed in the
above algorithm. Note that $G'$ is not bipartite as it contains self loops.

\begin{lemma}
A maximum matching in a planar bipartite graph $G=(V,E)$ with edge weight function $w:E\to [0, \poly(n)]$ can be computed in NC.
\end{lemma}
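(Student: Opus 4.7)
The plan is to verify that $G'$ is planar, invoke Lemma~\ref{lemma-minimum-f-factor} to compute a maximum $2$-factor on $G'$ in NC, and then show that the cycle-splitting postprocessing extracts a maximum matching of $G$. First, I would confirm that $G'$ remains planar: starting from a planar embedding of $G$, each doubled edge can be drawn as a second copy running infinitesimally parallel to the original, and each self-loop as a small loop placed inside an incident face of its vertex. The resulting planar multigraph has at most $2m + n$ edges and polynomially bounded weights, so by the remark following Lemma~\ref{lemma-minimum-f-factor} a maximum $2$-factor $F$ of $G'$ can be computed in NC.

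Next, I would analyze the combinatorial structure of $F$. Every vertex of $G'$ has degree exactly two in $F$: a vertex carrying a self-loop has its degree consumed entirely by that loop, while the remaining edges of $F$ form a $2$-regular multi-subgraph on the other vertices. Hence $F$ minus its self-loops decomposes into a vertex-disjoint union of cycles in the multigraph sense. The length-$2$ cycles correspond precisely to the ``edges taken twice'' processed by the last loop of Algorithm~\ref{algorithm-maximum-bipartite-matching}, and since $G$ is bipartite, every cycle of length $\geq 3$ in this decomposition must be of even length. Computing connected components and identifying cycles are standard NC primitives, so removing self-loops, keeping one copy of each doubled edge, and taking every other edge of each longer even cycle yields in NC a subset $M \subseteq E$ in which every vertex has degree at most one, i.e., a matching of $G$.

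Finally, I would establish optimality via a two-sided weight comparison. Let $w^*$ denote the weight of a maximum matching of $G$. Given any maximum matching $M^*$, the $2$-factor obtained by doubling each edge of $M^*$ and placing a self-loop at every unmatched vertex has weight $2 w^*$, so $w(F) \geq 2 w^*$. For the reverse direction, consider any even cycle $C$ in $F \setminus \text{(self-loops)}$ with alternating submatchings $M_1(C)$ and $M_2(C)$: if $w(M_1(C)) \neq w(M_2(C))$, then replacing the edges of $C$ in $F$ by two copies of the heavier submatching (using both copies of each edge available in $G'$) would yield another valid $2$-factor of strictly larger weight, contradicting the maximality of $F$. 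Hence $w(M_1(C)) = w(M_2(C)) = w(C)/2$, and since the two copies of each doubled edge trivially carry equal weight, the output matching $M$ satisfies $w(M) = w(F)/2 \geq w^*$, so $w(M) = w^*$.

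The main obstacle is the local exchange argument in the last step: it relies crucially on every cycle of $F \setminus \text{(self-loops)}$ having even length, which is exactly where the bipartiteness of $G$ enters. Without it, odd cycles of $F$ could split unevenly into their two near-matchings and could not be replaced by doubling either half inside $G'$, which is precisely why the same reduction is not known to extend to the non-bipartite case flagged in the introduction.
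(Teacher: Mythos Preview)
Your proposal is correct and follows essentially the same approach as the paper's own proof: apply Lemma~\ref{lemma-minimum-f-factor} to get a maximum $2$-factor of $G'$, use bipartiteness to ensure all non-trivial cycles are even, and compare $w(F)$ with $2w(M^*)$ via the doubling construction. The only difference is that you supply more detail---an explicit planarity check for $G'$ and an explicit local-exchange argument to justify that the two alternating halves of each cycle in an optimal $2$-factor have equal weight---whereas the paper asserts these points without proof.
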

\begin{proof}
As $f(v)=2$ for all $v\in F$, we can apply Lemma~\ref{lemma-minimum-f-factor} to obtain $2$-factor $F$ in NC.
Because $G$ is bipartite, after removal of self-loops $F$ contains only even cycles or edges taken twice. Thus the algorithm
constructs a matching $M$ out of it. The weight of $M$ is equal to half the weight
of $F$, as both ways of choosing edges from even length cycle need to have the same weight.
To prove optimality of $M$ assume there exists a matching $M^*$ with bigger weight.
Take $M^*$ twice and add self loop to all free vertices. This would give an $2$-factor with bigger
weight than $F$.
\end{proof}

By setting $w(e)=1$ for all $e\in E$, we can compute maximum size matching in a bipartite graph. We can make the above
reduction of maximum bipartite matching to maximum (non-bipartite) perfect matchings slightly stronger by guaranteeing that the size of the resulting graph
remains $O(n)$. The only requirement is to first reduce the maximum degree of a graph to $3$ using standard reduction (see e.g.,~\cite{esa}).
In this way $f$-factor gadgets will have constant size and the resulting graph has $O(n)$ size.

\bibliography{weighted}

\appendix

\section{Proof of Corollary~\ref{corollary-matching-weights}}
\label{appendix-matching-weights}
Consider a planar graph $G=(V,E)$ and a sign function $s:E\to \{-1,1\}$ for edges.
Let us define a \emph{signed adjacency matrix} of graph $G$ to be the $n \times n$ matrix $A(G,s)$
such that:
\[
A(G,s)_{i,j} =
 \left\{
 \begin{array}{rl}
 s(ij) & \textrm{if  } {ij} \in E \textrm{ and } i < j, \\
 -s(ij) & \textrm{if  } {ij} \in E \textrm{ and } i > j, \\
 0 & \textrm{otherwise.}
\end{array}
\right.
\]

\begin{theorem}[Kasteleyn~\cite{kasteleyn}]
There exists a function $s:E\to \{-1,1\}$ such that $\sqrt{\det(A(G,s))}$ is equal to the number of perfect matchings in a planar graph $G$.
Such function $s$ is called {\em Pfaffian orientation}.
\end{theorem}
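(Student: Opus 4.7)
The plan is to prove Kasteleyn's theorem via the Pfaffian. Recall that for any skew-symmetric matrix $A$ of even order $n$ one has $\det(A) = \mathrm{Pf}(A)^2$, where the Pfaffian expands as a signed sum over perfect matchings of $[n]$:
\[
\mathrm{Pf}(A) \;=\; \sum_{M} \sgn(M) \prod_{\{i,j\} \in M,\ i<j} A_{i,j},
\]
with $\sgn(M)$ the sign of the permutation listing the pairs of $M$ as $(i_1,j_1),(i_2,j_2),\ldots$ in the convention $i_k<j_k$ and $i_1<i_2<\cdots$. Applying this to $A(G,s)$, only pair-partitions all of whose pairs are edges of $G$ survive, so
\[
\mathrm{Pf}(A(G,s)) \;=\; \sum_{M \text{ perfect matching of } G} \sgn(M) \prod_{ij\in M} s(ij).
\]
Thus it suffices to choose $s$ so that every term in this sum has the same sign $\epsilon\in\{-1,+1\}$: then $|\mathrm{Pf}(A(G,s))|$ equals the number of perfect matchings, and $\det(A(G,s))=\mathrm{Pf}(A(G,s))^2$ is its square.

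Construction of $s$. Viewing $s$ as an orientation of $E$ (orient $ij$ from $i$ to $j$ iff $s(ij)=+1$ and $i<j$), I would build a \emph{Pfaffian orientation}: one in which every bounded face of $G$ has an odd number of edges oriented clockwise around its boundary. Such an orientation exists by the following NC-friendly construction. Fix a spanning tree $T$ of $G$ and orient its edges arbitrarily. Each non-tree edge $e$ closes a unique fundamental cycle bounding a set of faces; process faces from innermost to outermost, at each step orienting the single undecided boundary edge so as to achieve odd clockwise parity on that face. A short induction shows every bounded face ends with odd parity.

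Sign equality across matchings. Let $M_1,M_2$ be perfect matchings; their symmetric difference is a disjoint union of even cycles $C_1,\ldots,C_k$. I would show that the product $\sgn(M_1)\sgn(M_2)\prod_{ij\in M_1}s(ij)\prod_{ij\in M_2}s(ij)$ factors as a product over the $C_\ell$, and that each cycle contributes $+1$. This rests on \emph{Kasteleyn's lemma}: if $C$ is an even simple cycle enclosing $q$ vertices of $G$ in the planar embedding, then the number of edges of $C$ oriented clockwise has parity $|C|/2 + q + 1 \pmod 2$. The lemma follows by summing the odd-clockwise-parity property over all bounded faces strictly inside $C$, observing that each interior edge is counted once in each direction and therefore cancels. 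Converting this parity to signs exactly matches the swap in $\sgn$ produced when one rewrites the pair-partition of $M_2$ from that of $M_1$ by following the cycles.

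The main obstacle is the bookkeeping linking permutation signs $\sgn(M_i)$ to cycle orientations, and the planar parity identity of Kasteleyn's lemma. Once those are in place, every perfect matching of $G$ contributes the same sign to $\mathrm{Pf}(A(G,s))$, and taking $\sqrt{\det(A(G,s))}=|\mathrm{Pf}(A(G,s))|$ yields the count of perfect matchings.
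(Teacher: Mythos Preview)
The paper does not actually prove this theorem: it is quoted from Kasteleyn and followed only by a short heuristic explanation. That explanation works directly with the determinant expansion $\det(A(G,s))=\sum_{\pi}\sigma(\pi)\prod_i A_{i,\pi(i)}$, observes that the surviving terms correspond to permutations whose cycles are all even, invokes the bijection between such permutations and ordered pairs of perfect matchings, and then asserts that a Pfaffian orientation forces all these terms to carry the same sign.

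Your route via the Pfaffian identity $\det(A)=\mathrm{Pf}(A)^2$ is the standard textbook proof and is correct in outline. It is essentially the same argument one level up: instead of pairing permutations with ordered pairs of matchings and checking that every pair gets the same sign, you expand $\mathrm{Pf}$ directly over matchings and check that every single matching gets the same sign. This buys you a cleaner bookkeeping step, since you never have to set up the bijection the paper defers to~\cite{Jerrum2003}.

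Two small points of imprecision worth tightening. First, your construction of the orientation is phrased a bit awkwardly: the reason ``process faces from innermost to outermost'' works after fixing a spanning tree $T$ of $G$ is that the non-tree edges form a spanning tree of the dual, so peeling dual leaves gives each bounded face exactly one still-undecided boundary edge when it is processed. Saying this explicitly makes the induction transparent. Second, your stated parity in Kasteleyn's lemma is off by the term $|C|/2$: summing the odd-clockwise condition over the $f$ interior faces and using Euler's formula $q-e+f=1$ gives that the number of clockwise edges on $C$ is $\equiv q+1\pmod 2$, with no dependence on $|C|$. Since $q$ is even for any cycle in $M_1\oplus M_2$ (the interior vertices must be perfectly matched among themselves), every such cycle is oddly oriented, which is exactly what your sign comparison needs.
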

Let us explain shortly the idea behind this theorem. Consider the definition $\det(A(G,s))=\sum_{\pi \in \Pi_n} \sigma(\pi) \prod_{i=1}^{n}A(G,s)_{i,\pi(i)}$, where
$\Pi_n$ is the set of all $n$ element permutations and $\sigma$ denotes a sign of a permutation. Each term $\sigma(\pi) \prod_{i=1}^{n}A(G,s)_{i,\pi(i)}$
in $\det(A(G,s))$ can be seen as a set of edges $M_2=\{i\pi(i)\}_1^n$ in $G$. This term will contribute non-zero to the determinant only if all cycles in $M_2$ have
even length. If $M_2$ contains an odd cycle we can reverse signs $s$ on this cycle obtaining a term that will cancel out.
As shown in~\cite{Jerrum2003}, there is a bijection between ordered pairs of perfect matchings and all possible sets $M_2$ that contain cycles of even length,
i.e., non-zero terms in the determinant. The above theorem holds because for Pfaffian orientation all these terms have the same sign.

In our algorithms we will never directly need the number of matchings, but we need this idea to compute the weight of a minimum perfect matching.
Let $w:E\to [0, \poly(n)]$ be integral edge weight function.
We define a \emph{signed weighted adjacency matrix} of graph $G$ to be the $n \times n$ polynomial matrix $A(G,s,w)$
such that:
\[
A(G,s,w)_{i,j} =
 \left\{
 \begin{array}{rl}
 s(ij)y^{w(ij)} & \textrm{if  } {ij} \in E \textrm{ and } i < j, \\
 -s(ij)y^{w(ij)} & \textrm{if  } {ij} \in E \textrm{ and } i > j, \\
 0 & \textrm{otherwise,}
\end{array}
\right.
\]
where $y$ is the variable of the polynomial. The terms of $\det(A(G,s,w))$ correspond to pairs of perfect matching and in each term the power of $y$ will
correspond to the sum of their weights. Hence, the terms in $\det(A(G,s,w))$ of minimum degree needs to correspond to pairs of minimum weight perfect matchings.
Hence, we obtain the following.
\begin{corollary}
For a Pfaffian orientation $s$, the degree of the minimum degree term of $\det(A(G,s,w))$ in $y$ is
equal to twice the weight of minimum weight perfect matching in $G$.
\end{corollary}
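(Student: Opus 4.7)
The plan is to adapt Kasteleyn's unweighted argument by tracking the exponent of the indeterminate $y$ through the determinant expansion. Since $A(G,s,w)$ is skew-symmetric of even order $n$ (we may assume $G$ has a perfect matching, else $\det(A)\equiv 0$ and the statement is vacuous), I would invoke the classical identity $\det(A)=\mathrm{Pf}(A)^2$ for skew-symmetric matrices of even order.

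Next, I would expand the Pfaffian as a signed sum over perfect matchings of $\{1,\ldots,n\}$; any summand that uses a non-edge of $G$ vanishes because the corresponding entry of $A(G,s,w)$ is zero. By the defining property of a Pfaffian orientation, every perfect matching of $G$ contributes with the same sign in the unweighted case. Inserting the factors $y^{w(e)}$ does not disturb this, because these factors are central scalars that play no role in the odd-cycle sign cancellations arranged by Kasteleyn's theorem. Hence
\[
\mathrm{Pf}(A(G,s,w)) \;=\; \varepsilon \sum_{M} y^{w(M)}
\]
for some fixed $\varepsilon\in\{\pm 1\}$, with the sum ranging over all perfect matchings $M$ of $G$.

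Squaring then gives
\[
\det(A(G,s,w)) \;=\; \sum_{(M_1,M_2)} y^{w(M_1)+w(M_2)},
\]
where the sum is over ordered pairs of perfect matchings of $G$. Every coefficient in this expansion is a nonnegative integer, so no cancellation can occur within a fixed degree. The minimum degree appearing is $\min_{M_1,M_2}\bigl(w(M_1)+w(M_2)\bigr)=2\min_M w(M)=2w(M^*)$, achieved by taking $M_1=M_2=M^*$ for any minimum weight perfect matching $M^*$; its coefficient counts ordered pairs of minimum matchings and is strictly positive. Therefore the degree of the minimum-degree term of $\det(A(G,s,w))$ in $y$ is exactly $2w(M^*)$, as claimed.

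The only step that requires any care is confirming that the Pfaffian orientation property transfers to the polynomial setting, but this is essentially automatic: the combinatorial argument showing that differences of perfect matchings decompose into cycles which all receive matching signs under a Pfaffian orientation depends only on the scalar signs $s(e)$ and is unaffected by attaching the monomial weights $y^{w(e)}$ to the matrix entries. With that observation in hand, the reduction from the weighted to the unweighted Kasteleyn identity is immediate.
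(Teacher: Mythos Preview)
Your proof is correct and follows essentially the same reasoning as the paper: terms in the determinant correspond to ordered pairs of perfect matchings, a Pfaffian orientation guarantees uniform sign so no cancellation occurs, and the exponent of $y$ records the combined weight. The only cosmetic difference is that you route through the identity $\det(A)=\mathrm{Pf}(A)^2$, whereas the paper's discussion works directly with the determinant expansion and the bijection between permutations with only even cycles and ordered pairs of perfect matchings.
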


The determinant of matrix with univariate polynomials of degree $\poly(n)$ can be computed in NC as shown in~\cite{BORODIN1983}.\footnote{The exact statement of the theorem in~\cite{BORODIN1983} limits
the degree to $n$, but we can always pad the matrix with $1$'s on the diagonal to make its size equal to the degree.} Hence, we obtain the following.
\begin{corollary}
\label{corollary-number}
Given a planar graph $G=(E,V)$ and a weight function $w:E\to [0, \poly(n)]$ the weight $w(M^*)$ of minimum perfect matchings in $G$ can be computed in NC.
\end{corollary}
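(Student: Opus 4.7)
The plan is to compose three ingredients already set up in the surrounding text. First, I would compute a Pfaffian orientation $s : E \to \{-1,+1\}$ of $G$ in NC. Such an orientation exists for every planar graph by Kasteleyn's theorem, and it is standard that one can be produced in NC from a planar embedding (compute a planar embedding in NC, pick any spanning tree and orient its edges arbitrarily, then orient each remaining edge so that the face to its left has an odd number of clockwise edges; the face constraints can be solved in parallel by a telescoping sum over faces). Second, I would form the signed weighted adjacency matrix $A(G,s,w)$; since $w : E \to [0,\poly(n)]$, each nonzero entry is a monomial $\pm y^{w(ij)}$ of degree at most $\poly(n)$, so $A(G,s,w)$ is a matrix whose entries are univariate polynomials of polynomially bounded degree.

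Next I would invoke the algorithm of Borodin, von zur Gathen and Hopcroft~\cite{BORODIN1983} to compute $\det(A(G,s,w))$ in NC, expressed as an explicit list of its coefficients in $y$. As already noted in the paragraph just before the corollary, if the Borodin et al.\ statement restricts the allowed degree to the matrix dimension, I would pad $A(G,s,w)$ with $1$'s on the diagonal to form a block matrix $\bigl(\begin{smallmatrix} A(G,s,w) & 0 \\ 0 & I \end{smallmatrix}\bigr)$ whose size equals the degree bound; this leaves the determinant unchanged.

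Finally, scanning the resulting coefficient vector in parallel, I would return half of the smallest index $d$ at which the coefficient is nonzero. Correctness is immediate from the corollary preceding the statement: the minimum degree of a nonzero term in $\det(A(G,s,w))$ is exactly $2 w(M^*)$, because the Pfaffian orientation forces all nonzero terms of the determinant (corresponding to pairs of perfect matchings) to share a common sign, so no cancellation occurs at any fixed power of $y$. The only edge case is when $G$ has no perfect matching at all; then $\det(A(G,s,w)) \equiv 0$, which the coefficient scan detects and we report accordingly.

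I do not expect any serious obstacle, since the proof is a one-line composition of Kasteleyn's identity with the Borodin et al.\ NC determinant routine. The only step that requires care is the NC construction of the Pfaffian orientation, but this reduces to well-known planar-embedding and tree-contraction primitives already cited in the preliminaries.
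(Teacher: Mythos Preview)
Your proposal is correct and follows essentially the same approach as the paper: the paper's ``proof'' is simply the sentence preceding the corollary, which invokes the Borodin--von zur Gathen--Hopcroft NC determinant routine (with the same diagonal-padding trick you mention, stated in a footnote) together with the previous corollary identifying $2w(M^*)$ as the minimum-degree term of $\det(A(G,s,w))$. Your write-up is in fact more careful than the paper's, since you explicitly address the NC construction of the Pfaffian orientation and the no-perfect-matching edge case, both of which the paper leaves implicit.
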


Now let us argue how to realize Corollary~\ref{corollary-matching-weights}. In order, to check whether an edge $e\in E$ is allowed
we just need to check whether $w(M_e)=w(M)$, where $M$ is the minimum perfect matching in $G$, and $M_e$ is the minimum perfect matching using edge $e$.
We have that $w(M_e)=w(M'_e)+w(e)$, where $M'_e$ is the minimum perfect matching in $G-e$, i.e., in $G$ with both endpoints of $e$ removed.
Finally, let us describe how to compute $w(M_v)$. If $|V|$ is odd, then $M_v$ is simply a minimum perfect matching of $G-v$. Otherwise,
when $|V|$ is even, we take the minimum of $w(M_{vu})$ over all $u\in U$, where $M_{uv}$ is a minimum perfect matching of $G-v-u$.


\section{Finding $\Omega(n)$ Edge Disjoint Even Semi-Simple Cycles}
\label{appendix-finding-double-faces}
In this section we assume to be working with a $2$-connected planar graph $G$. It is a standard assumption that implies that all faces of $G$ are simple.
The main artifacts, that will be useful here, are double faces defined in the following way.\footnote{Double faces correspond to Building Block 3 or 4 in~\cite{Kulkarni2004}.}

\begin{definition}
{\em Double face} $(f_1,f_2,p)$ is a pair of two simple faces $f_1$ and $f_2$, $f_1\neq f_2$, connected by a simple path $p$ possibly of length $0$.
The faces can be incident and even share edges. In such a case, $p$ contains a single vertex that is shared by both faces.
\end{definition}

If any of the faces of the double face is of even length then it immediately gives an even semi-simple cycle.
The next lemma argues about the case when both faces are odd.

\begin{lemma}
\label{lemma-double-face}
Let $(f_1,f_2,p)$ be an odd double face of $G$, i.e., both faces are of odd length, then there exists an even semi-simple cycle $C$ in $(f_1,f_2,p)$.
\end{lemma}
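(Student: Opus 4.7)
The plan is to exhibit a closed walk $C$ obtained by walking around $f_1$, crossing over to $f_2$ along $p$, walking around $f_2$, and returning along $p$. I will first handle the main case where $p$ has length at least one (and, by choice, is internally disjoint from $f_1\cup f_2$), then dispose of the degenerate cases.

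Let $u$ be the endpoint of $p$ on $f_1$ and $v$ the endpoint on $f_2$. In the main case, I define $C$ as the closed walk starting at $u$ that traverses $f_1$ once back to $u$, then $p$ from $u$ to $v$, then $f_2$ once back to $v$, then $p$ in reverse back to $u$. Its total length is $|f_1|+2|p|+|f_2|$, which is even because both face lengths are odd and $2|p|$ is even. Each edge of $f_1$ (respectively $f_2$) is traversed exactly once in $C$, while each edge of $p$ is traversed exactly twice; thus any edge of $f_1$ serves as the witness $e_C$ for semi-simplicity.

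If $|p|=0$, the two faces meet at a single vertex $u$. When they share no edges, the same construction collapses to the concatenation $C = f_1\cdot f_2$ based at $u$, of length $|f_1|+|f_2|$ (even), with every edge used once. When $f_1$ and $f_2$ are instead incident along a nonempty path $q$ of shared edges — the only other possibility in a $2$-connected plane graph — I set $C$ to be the symmetric difference $f_1\bigtriangleup f_2$: this is a simple cycle of length $|f_1|+|f_2|-2|q|$, which remains even, and every one of its edges appears once.

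The main obstacle I anticipate is bookkeeping rather than substance: namely, ensuring that the closed walk qualifies as a \emph{semi-simple cycle} in the sense of Lemma~\ref{lemma-even-cycle} even though the edges of $p$ occur twice in $C$, and handling the pathological situation in which $p$ might share edges or internal vertices with $f_1$ or $f_2$. The first is fine because the proof of Lemma~\ref{lemma-even-cycle} only telescopes along the walk $C-e_C$ and does not require $C$ to be simple. The second can be sidestepped by choosing $p$ to be internally disjoint from both face boundaries — e.g.\ by replacing $p$ with a shortest such path between the two faces — so that the edge-multiplicity counts above are exactly as claimed.
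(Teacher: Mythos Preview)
Your argument for the case where $f_1$ and $f_2$ share no edge is correct and is exactly what the paper does: walk around $f_1$, traverse $p$, walk around $f_2$, traverse $p$ back, obtaining an even closed walk in which every edge of $f_1$ is used once.

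The gap is in the shared-edge case. Your claim that in a $2$-connected plane graph two faces can only meet along a single path is false, and the symmetric-difference construction breaks precisely when it fails. Take the $7$-cycle $1\,2\,3\,4\,5\,6\,7$ together with the two non-crossing chords $13$ and $46$. This graph is $2$-connected, and the inner pentagonal face $f_1=1\text{--}3\text{--}4\text{--}6\text{--}7\text{--}1$ and the outer heptagonal face $f_2$ are both odd. Their common edges are $\{34\}$ and $\{67,71\}$, two disjoint arcs; hence $f_1\bigtriangleup f_2$ is the union of the two \emph{odd} triangles $123$ and $456$, and neither component gives you an even cycle. So the symmetric difference does not suffice; you still have to explain how to splice pieces of $f_1$ and $f_2$ to get a single even cycle in this situation.

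The paper's proof handles this case differently: choose $e_C\in f_1\setminus f_2$, extend along $f_1$ in both directions until you first hit vertices $u,v$ lying on $f_2$, giving an arc $f_1[u,v]$ whose interior avoids $f_2$; since $|f_2|$ is odd, the two $u$--$v$ arcs of $f_2$ have opposite parities, and concatenating $f_1[u,v]$ with the arc of matching parity yields a \emph{simple} even cycle through $e_C$. In the example above this produces, e.g., the $6$-cycle $1\text{--}2\text{--}3\text{--}4\text{--}6\text{--}7\text{--}1$.

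A smaller point: your proposed fix of ``replacing $p$ with a shortest such path between the two faces'' is not legitimate, since the lemma asks for a cycle inside the given double face $(f_1,f_2,p)$ and a shortest path in $G$ may use edges outside $f_1\cup f_2\cup p$. What does work is to pass to a sub-path of $p$ whose interior is disjoint from both face boundaries.
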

\begin{proof}
First, let us consider the case when $f_1$ and $f_2$ share an edge, and let $e_C$ be an edge of $f_1$ that does not belong to $f_2$.
Staring from $e_C$ walk along $f_1$ in both directions till you encounter vertices $u$ and $v$ that belong to both $f_1$ and $f_2$. Let $f_1[u,v]$ be
that part of $f_1$ containing $e_C$. Let $f_2[u,v]$ and $f_2[v,u]$ be the parts of $f_2$ that end at $u$ and $v$. Because $f_2$ has
odd length $f_2[u,v]$ and $f_2[v,u]$ have different parity, e.g., $f_2[u,v]$ has odd length, whereas $f_2[v,u]$ has even length.
If $f_1[u,v]$ has odd length we obtain odd cycle by joining it with $f_2[u,v]$. Otherwise, joining $f_1[u,v]$ with $f_2[v,u]$ gives an
odd cycle. In any case this cycle contains edge $e_C$ -- see Figure~\ref{fig:double-faces}~c).

Second, let us consider the case when $f_1$ and $f_2$ do not share an edge. In this case we can construct an odd length semi-simple cycle
by walking along $f_1$, $f_2$ and along both directions of $p$. This cycle is semi-simple, and we can take any edge of $f_1$ or $f_2$ as $e_C$.
\end{proof}

Let us now show how to find double faces in a graph. The \emph{dual} $G^*$ of $G$ is a multigraph having a vertex for each  face of $G$. For each edge $e$ in $G$,
there is an edge $e^*$ in $G^*$ between the vertices corresponding to the two faces of $G$ adjacent to $e$.
We identify faces of $G$ with vertices of $G^*$ and since there is a one-to-one correspondence
between edges of $G$ and edges of $G^*$, we identify an edge of $G$ with the corresponding edge in $G^*$.

The next algorithm matches faces of the graph into double faces -- see Figure~\ref{fig:double-faces-2}. The idea is to use a spanning tree $T_D$ of the dual graph of $G$. In such
a spanning tree we pair together two children of a node, or we pair a child with its parent. In the first case, we have a path connecting
two children, whereas in the second case two faces share an edge. Using such pair, we can match every face, but the root if the number of faces is odd.
Then we take maximal set of double faces that are edge disjoint. In Lemma~\ref{lemma-number-of-double-cycles} we show that there are $\Omega(f)=\Omega(n)$ such faces.

\begin{algorithm}[H]
\begin{algorithmic}[1]
\State{Find a spanning tree $T_D$ of the dual graph and root it at arbitrary vertex $f_r$.}
\State{For all $f\in T_D$ compute the number of vertices $n(f)$ in the subtree of $T_D$ rooted at $f$.}
\ForAll{$f\in T_D$}\Comment{In parallel}
\State{Let $c_f:=\{c\in T_D: c \textrm{ is a child of } f \textrm{ and } n(c) \textrm{ is odd}\}$.}
\If{$|c_f|$ is odd}
\State{Let $c$ be any child in $c_f$.}
\State{Mark $(c,f,\{v\})$ as a double face where $v$ is any vertex shared by $c$ and $f$.}
\State{$c_f:=c_f-c$.}
\EndIf
\State{Order $c_f$ according to the order around $f$.}
\State{Mark pairs of consecutive children $c_1,c_2$ together with path $p_f$ of $f$ that connects them as a double face $(c_1,c_2,p_f)$.}
\EndFor
\State{Construct a graph $G_{DF}$ where vertices are marked double faces and edges denote which double faces share an edge.}
\State{Find maximal independent set of vertices in $G_{DF}$ and return it.}
\end{algorithmic}
\caption{Given graph $G$ where all faces are simple, computes a set of $\Omega(n)$ edge disjoint double faces.}
\label{algorithm-double-cycles}
\end{algorithm}

\begin{figure}[t]
\centering
    \includegraphics[width=0.99\textwidth]{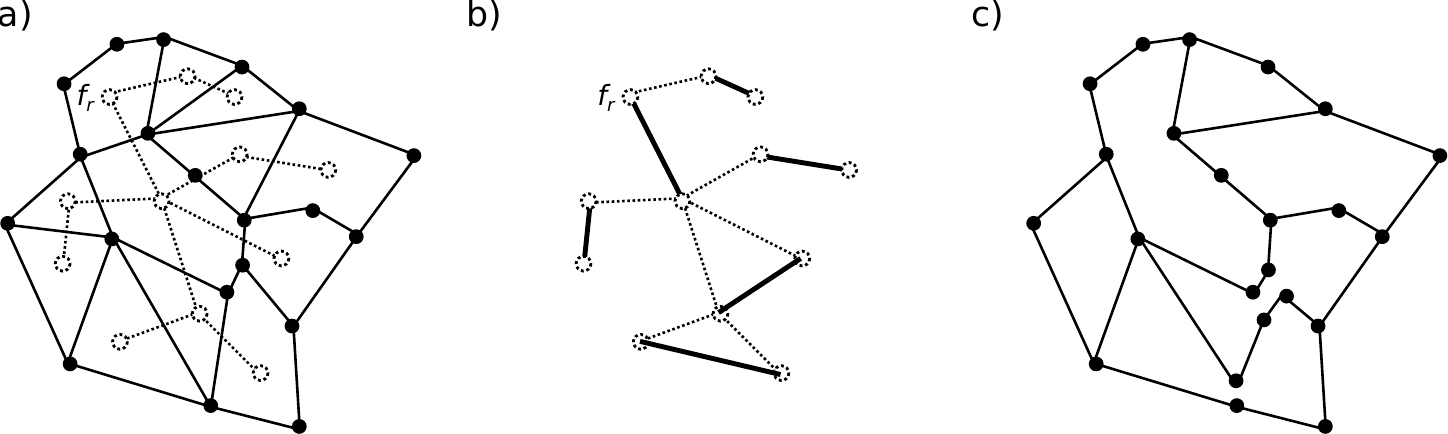}
    \caption{Figure a) shows a planar graph together with a spanning tree of the dual graph. Figure b) shows
    a paring of faces constructed by Algorithm~\ref{algorithm-double-cycles}. Figure c) presents a cut-open graph as
    constructed in the proof of Lemma~\ref{lemma-number-of-double-cycles} -- each double face corresponds to a face in this graph.\label{fig:double-faces-2}}
\end{figure}

\begin{lemma}[Lemma 3 from~\cite{Mahajan:2000}]
\label{lemma-mahajan}
A planar graph with $f$ faces contains a set of $\Omega(f)$ edge disjoint faces.
\end{lemma}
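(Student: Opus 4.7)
The plan is to pass to the dual graph of $G$ and invoke the linear lower bound on the independence number of planar graphs. Let $G^*$ denote the dual: its vertices are the $f$ faces of $G$, and for each edge $e$ of $G$ there is a corresponding edge of $G^*$ joining the two faces incident to $e$. By construction, two distinct faces of $G$ share an edge if and only if the corresponding vertices are adjacent in the underlying simple graph of $G^*$. A set of pairwise edge-disjoint faces of $G$ is therefore the same thing as an independent set in this simple underlying graph $G^*_s$, which is itself planar on $f$ vertices. The problem thus reduces to showing that $G^*_s$ has independence number $\Omega(f)$.

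For the independent set bound, I would use the $5$-degeneracy of planar graphs, which follows immediately from Euler's formula: a simple planar graph on $n'$ vertices has at most $3n' - 6$ edges, so some vertex has degree at most $5$, and this property is inherited by every subgraph. Processing vertices in the reverse of a min-degree elimination order and greedily assigning each the smallest color not used by its already-colored neighbors yields a proper $6$-coloring of $G^*_s$. By pigeonhole, the largest color class has size at least $f/6$, and is an independent set; translating back gives $\Omega(f)$ pairwise edge-disjoint faces of $G$, as claimed.

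The only subtlety to dispose of is that the dual $G^*$ may carry multi-edges (when two faces share several edges of $G$) or self-loops (when a bridge of $G$ bounds a single face on both sides). Neither affects the argument, because independence is a property of the underlying simple graph. In the intended application the graph is $2$-connected, so self-loops cannot arise at all. If a stronger constant were wanted, the Four Color Theorem would upgrade the bound to $\geq f/4$, but the degeneracy route already suffices for the $\Omega(f)$ conclusion and keeps the proof elementary. The main thing to be careful about is simply not to conflate multi-adjacency in $G^*$ with multiple independence constraints, which is why I explicitly work with $G^*_s$.
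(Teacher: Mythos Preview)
Your proof is correct. The paper does not actually prove this lemma: it is stated as a citation (Lemma~3 of~\cite{Mahajan:2000}) and used as a black box in the proof of Lemma~\ref{lemma-number-of-double-cycles}. So there is no ``paper proof'' to compare against; your argument supplies a clean, self-contained justification where the paper simply defers to the literature.

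The approach you take --- pass to the dual, observe that edge-disjoint faces correspond to independent sets in the simplified dual, and then invoke $6$-colorability of planar graphs via $5$-degeneracy --- is standard and sound. Your handling of the multi-edge/self-loop issue is appropriate: independence is unaffected by parallel edges, and in the $2$-connected setting of Appendix~\ref{appendix-finding-double-faces} there are no bridges and hence no self-loops in $G^*$. The constant $1/6$ is of course weaker than the $1/4$ obtainable from the Four Color Theorem, but as you note this is irrelevant for an $\Omega(f)$ bound.
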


\begin{lemma}
\label{lemma-number-of-double-cycles}
The maximal independent set of vertices in $G_{DF}$ is of size $\Omega(n)$, i.e., Algorithm~\ref{algorithm-double-cycles} finds $\Omega(n)$ edge disjoint double faces.
\end{lemma}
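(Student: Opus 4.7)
The plan is to build a planar ``cut-open'' graph $H$ whose faces are in bijection with the marked double faces of $G$ (before the final MIS step), apply Lemma~\ref{lemma-mahajan} to $H$, and translate the resulting $\Omega(n)$ edge-disjoint faces of $H$ into $\Omega(n)$ edge-disjoint double faces of $G$. This produces a large independent set inside $G_{DF}$, from which the size bound on the algorithm's maximal independent set will follow.

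I would build $H$ from $G$ by performing, for each marked double face $D=(f_1,f_2,p)$, a local surgery that merges $f_1$ and $f_2$ into a single face of $H$. When $p$ is a single shared vertex (the adjacent-faces case), deleting the shared edge of $f_1,f_2$ already merges them. When $p$ has positive length along the parent face $f$ in $T_D$, I would duplicate every edge of $p$ and split each internal vertex of $p$, so that the two copies of $p$ bound a thin strip which, together with $f_1$ and $f_2$, forms one face of $H$. Pairings coming from different nodes of $T_D$ live in disjoint regions of the plane (each pair is confined to its parent face in $T_D$), so the surgeries compose to give a planar multigraph $H$ with $\Theta(f)$ faces; by Lemma~\ref{lemma-degree-3} and the $2$-connectedness assumption we have $f=\Omega(n)$, hence $H$ has $\Omega(n)$ faces. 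Applying Lemma~\ref{lemma-mahajan} to $H$ then yields a collection $\mathcal{F}_H$ of $\Omega(n)$ edge-disjoint faces of $H$. Each edge of $H$ is either an edge of $G$ left intact or a copy of a duplicated path-edge, so if two double faces of $G$ shared an edge of $G$ the corresponding pair of $H$-faces would necessarily share an edge of $H$; contrapositively, edge-disjointness in $H$ implies edge-disjointness of the corresponding double faces in $G$. Thus $\mathcal{F}_H$ pulls back to $\Omega(n)$ edge-disjoint marked double faces, which form an independent set of size $\Omega(n)$ in $G_{DF}$.

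To conclude that the maximal IS actually returned by the algorithm has size $\Omega(n)$, rather than merely that the maximum IS is that big, I would argue that $G_{DF}$ is sparse and effectively low-degree. Each edge of $G$ contributes only $O(1)$ edges to $G_{DF}$ (through its two incident faces and its possible role on one path $p$), so $|E(G_{DF})|=O(n)$ and the average degree is $O(1)$; combined with the planar region-adjacency structure that the double-face partition induces, this should yield $\Delta(G_{DF})=O(1)$ after discarding a constant fraction of exceptional vertices, so any maximal IS has size $\ge |V(G_{DF})|/(\Delta+1)=\Omega(n)$. Alternatively, one can have the NC MIS routine start from the $\Omega(n)$-size independent set produced above and extend it to a maximal IS, which remains $\Omega(n)$ in size. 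The main obstacle I expect is carrying out the cut-open construction rigorously: nested pairings (a parent-child pair together with several sibling pairs inside the same face $f$) must compose into a well-defined planar multigraph, and the face/double-face correspondence must preserve edge incidences so that edge-disjointness transfers correctly between $H$ and $G$.
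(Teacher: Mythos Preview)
Your approach is essentially the paper's: construct the cut-open planar graph (the paper calls it $G'$) in which each marked double face becomes a single face, note that it has at least $(f-1)/2=\Omega(f)$ such faces, apply Lemma~\ref{lemma-mahajan} to obtain $\Omega(f)$ edge-disjoint faces, and invoke Lemma~\ref{lemma-degree-3} for $f=\Omega(n)$. Your additional paragraph about maximal versus maximum independent sets goes beyond what the paper's proof does---it simply stops after exhibiting $\Omega(n)$ edge-disjoint double faces and does not separately argue that an arbitrary maximal IS in $G_{DF}$ attains this bound.
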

\begin{proof}
Consider the graph $G$ and all the marked double faces. Let us construct a new graph $G'$ by converting double faces $(c_1,c_2,p_f)$ into faces in the following way.
If $c_1$ and $c_2$ share some edges we remove these edges. Otherwise, we cut the graph open along $p_f$ joining $c_1$ and $c_2$ into single face -- see Figure~\ref{fig:double-faces-2} c).
By the construction this graph has at least $\frac{f-1}{2}=\Omega(f)$ faces that correspond to double faces in $G$. The face $f_r$ might have remained not assigned to any double face. By Lemma~\ref{lemma-mahajan} we know that $G'$ contains $\Omega(f)$ edge disjoint faces. Finally, by Lemma~\ref{lemma-degree-3} we know that $f=\Omega(n)$. This ends the proof.
\end{proof}

This implies Lemma~\ref{lemma-double-faces-x}.

\section{Finding Blossoms}
\label{appendix-finding-blossoms}
In this section we show how to find a laminar family of blossoms that forms a critical dual. Algorithm~\ref{algorithm-blossoms} is essentially a restatement
of Algorithm~5 from~\cite{CGSa}. The only difference is that in~\cite{CGSa} the algorithm is stated as processing values
$\alpha$ in increasing order to save on sequential work. However, here we process each value independently in parallel.

We have to note that this algorithm actually constructs a critical dual
with an additional property which is called balanced.\footnote{For formal definition see~\cite{CGSa} or see~\cite{Gabow:2012}
for alternative definition of canonical dual.} The take out note is that this property makes the dual unique and explains
why this algorithm can be implemented in NC in a straightforward way.

\begin{algorithm}[H]
\begin{algorithmic}[1]
\State For each edge $uv$ set $w'(uv)=w(uv)+w(M_u)+w(M_v)$.
\State{Let $A$ be the set of all different values $w'(uv)$. Let $\mathcal{B}=\emptyset$.}
\For{each $\alpha \in A$}\Comment{In parallel}
\State{Let $\mathcal{C}$ be the set of connected components
  of the graph $(V,\{uv : uv \in E, w'(uv) \le \alpha\})$.}
\State Add the nontrivial components of $\mathcal{C}$ to $\mathcal{B}$.
\EndFor
\State \Return $\mathcal{B}$.
\end{algorithmic}
\caption{Given all the values $w(M_u)$, finds the blossoms of a balanced critical dual in the graph $G$ where all edges are allowed.}
\label{algorithm-blossoms}
\end{algorithm}

\begin{lemma}[Lemma 6.19~\cite{CGSa}]
Let $G=(V,E)$ be undirected connected graph where edge weights are given by $w:E\to [1,\poly(n)]$
and where every edge is allowed.
Given all values $w(M(v))$ for $v \in V$,
Algorithm~\ref{algorithm-blossoms} finds a blossoms of a balanced critical dual solution.
\end{lemma}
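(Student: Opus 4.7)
The plan is to treat Algorithm~\ref{algorithm-blossoms} as a parallel implementation of the procedure in~\cite{CGSa} and invoke their Lemma~6.19 essentially as a black box. First I would check the input requirements: given the values $w(M_v)$ for every $v \in V$, each $w'(uv) = w(uv) + w(M_u) + w(M_v)$ is computable in $O(1)$ work per edge, and the set $A$ of distinct values has $|A| \le |E|$, so it fits in polynomially many processors.

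Next I would recall the structural content of~\cite{CGSa}. Because every edge is allowed, the minimum perfect matching weight $w(M)$ equals $w(uv) + w(M'_{uv})$ for every $uv \in E$, so $w'(uv)$ is a natural "non-tightness'' potential attached to each edge: it equals twice $w(M)$ plus the slack incurred when we are forced to split the pair $\{u,v\}$ across two almost-perfect matchings. The key structural claim of CGS is that the balanced critical dual is characterized uniquely by a laminar family $\mathcal{B}^\star$, in which every blossom $B$ is associated with a threshold $\alpha_B$ such that $B$ is exactly the connected component containing its vertices in the graph $G_\alpha := (V, \{uv \in E : w'(uv) \le \alpha\})$ at $\alpha = \alpha_B$, and the set of thresholds $\{\alpha_B : B \in \mathcal{B}^\star\}$ is precisely the subset of $A$ at which new nontrivial components are born.

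Given this characterization, correctness of the algorithm is immediate: the parallel loop over $\alpha \in A$ extracts the nontrivial connected components of every $G_\alpha$, which is exactly $\mathcal{B}^\star$. The laminar property is automatic since $\alpha_1 \le \alpha_2$ implies $G_{\alpha_1} \subseteq G_{\alpha_2}$, so each component of $G_{\alpha_1}$ is contained in some component of $G_{\alpha_2}$. It is worth noting why we use the balanced critical dual rather than an arbitrary critical dual: ordinary critical duals are non-unique, and different processors could reach different consistent answers; the balanced variant is pinned down by the $w'$-threshold construction, so every processor computing components of $G_\alpha$ independently agrees.

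The main (and only genuinely nontrivial) obstacle is the CGS characterization itself, i.e.\ showing that thresholding $w'$ produces exactly the balanced critical dual. This requires the full complementary-slackness and balancing argument developed in~\cite{CGSa}, and I would not re-derive it here; the plan is to cite Lemma~6.19 of~\cite{CGSa} directly, and merely observe that CGS's sequential sweep over increasing $\alpha$ values is equivalent to our parallel sweep because the output of each iteration depends only on $G_\alpha$ and not on previous iterations, so NC-parallel connected-components on each $G_\alpha$ suffices.
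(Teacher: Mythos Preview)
Your proposal is correct and essentially coincides with the paper's treatment: the paper does not give an independent proof of this lemma but simply cites Lemma~6.19 of~\cite{CGSa} as a black box, noting only that Algorithm~\ref{algorithm-blossoms} is a restatement of their Algorithm~5 with the sequential increasing-$\alpha$ sweep replaced by a parallel one, which is sound because each iteration depends only on $G_\alpha$. Your write-up adds more detail (the interpretation of $w'$, the laminarity-from-monotonicity observation, the uniqueness rationale for balanced duals), but the underlying approach is the same.
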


\section{Maximum Non-Bipartite Matching}
\label{appendix-maximum-non-biparitte}
In this section we will turn our attention to the most challenging problem -- maximum size non-bipartite matching, where
we will show an $O(\sqrt{n}\log n)$ time algorithm. Thus solving this problem in NC remains an intriguing open problem. We note
that so far no $o(n)$ time deterministic parallel algorithm for this problem was known.
We will first observe that Algorithm~\ref{algorithm-fix-up} can be adopted to the case when graph $G$ is not factor critical,
i.e., consider arbitrary almost perfect matching $M$ in $G$. For all $v\in V$, we redefine $l$ as:
\[
l(v^e)= \begin{cases} 2w_M(M_v)& \textrm{ if } M_v \textrm{ exists}, \\
\infty & \textrm{ otherwise}.
\end{cases}.
\]
Similarly, $l(v^o)= l(u^e)-1$ for all $uv\in M$, where $\infty-1=\infty$. The following is immediate
consequence of the proof given in Section~\ref{section-single-path}.

\begin{corollary}
\label{corollary-fix-up}
Let $G$ be a graph and let $M_s$ be an almost perfect matching in $G$.
An almost perfect matching $M_u$, if it exists, can be found in NC using Algorithm~\ref{algorithm-fix-up}.
\end{corollary}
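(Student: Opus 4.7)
The plan is to reuse the argument of Lemma~\ref{lemma-fix-up} almost verbatim, observing that the only property of a factor-critical graph used there was that every vertex is reachable from $s$ by an alternating path with respect to $M_s$. Once this implicit global assumption is replaced by the explicit per-vertex reachability information encoded by the modified $l$, the rest of the proof transfers without change.

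The key step I would carry out first is to re-establish the analogue of Lemma~\ref{lemma-short-paths} in this more general setting. For any $u$ such that an almost perfect matching $M_u$ missing $u$ exists, the symmetric difference $M_s \oplus M_u$ consists of an $s$-to-$u$ alternating path together with some alternating cycles; minimizing $w_M(M_u)$ over such matchings is equivalent to minimizing the number of non-$M_s$ edges along such an alternating path, so $2w_M(M_u)$ is still the length of the shortest alternating $s$-to-$u$ path. Conversely, if no alternating $s$-to-$u$ path exists, no $M_u$ exists either, so the convention $l(u^e)=\infty$ is consistent and (once we keep only edges between consecutive layers) correctly isolates $u^e$ from $s^e$ in $G_L$.

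Next I would verify that the fix-up loop of Algorithm~\ref{algorithm-fix-up} is oblivious to factor-criticality: its correctness in Lemma~\ref{lemma-fix-up} only relied on the fact that every $u^z$ with finite layer is reachable from $s^e$ through at least one simple alternating path in $G_L$, a property preserved here. Thus after the removal step every surviving $s^e$-to-$u^z$ path in $G_L$ corresponds to a simple alternating path in $G$; for each $u$ with $l(u^e)$ finite the algorithm returns such a path whose symmetric difference with $M_s$ produces $M_u$, while for $u$ with $l(u^e)=\infty$ the absence of any $s^e$-to-$u^e$ path in $G_L$ certifies that no $M_u$ exists. All computations remain in NC since Corollary~\ref{corollary-matching-weights} supplies the $w_M(M_v)$ values and the layer construction, filtering, parallel fix-up loop, and path extraction are unchanged.

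The main (and really only) obstacle is the bookkeeping around the $\infty$ convention: one must be careful to check that non-existence of $M_u$ is genuinely equivalent to non-reachability of $u^e$ from $s^e$ in the modified $G_L$, rather than merely to $l(u^e)=\infty$ in isolation. This equivalence is exactly what the symmetric-difference argument in the second paragraph establishes, so no new machinery is needed and the result is naturally stated as a corollary rather than a theorem.
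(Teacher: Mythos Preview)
Your proposal is correct and follows exactly the paper's approach: the paper states the corollary as an ``immediate consequence of the proof given in Section~\ref{section-single-path}'', and you have simply spelled out why---namely that factor-criticality in Lemma~\ref{lemma-fix-up} was only used to guarantee that every vertex is reachable by an alternating path, and the $\infty$ convention handles unreachable vertices without affecting the rest of the argument.
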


This gives rise to the following algorithm that uses NC procedure for finding $O(\sqrt{n})$-planar separators~\cite{MILLER86}.

\begin{algorithm}[H]
\begin{algorithmic}[1]
\If{$G$ is a single vertex}
\State{Return $G$}
\EndIf
\State{Split the graph into two parts $G_1$ and $G_2$ using planar separator $T$.}
\State{Recurse on $G_1\cup T$ and $G_2$ to find maximum size matching $M_1$ and $M_2$.}
\State{Let $F$ be the set of free vertices.}
\State{For each $v\in F$ add a new node $v'$.}
\State{Add edge $vv'$ to $E$ and $M$.}
\ForAll{$v\in F$}\Comment{Sequentially}
\State{Remove $v'$ and $vv'$ from $G$.}
\If{there exists an alternating path $p$ from $v$ to any vertex in $F$}
\State{apply $p$ to $M$}
\Else
\State{Readd $v'$ to $V$ and $vv'$ to $E$ and $M$.}
\EndIf
\EndFor
\State{Remove $v'$ for all $v\in F$.}
\State{Return $M$.}
\end{algorithmic}
\caption{Finds a maximum cardinality matching in graph $G$.}
\end{algorithm}

After going up from the recursion on $G_1$ and $G_2$ there might be at most $O(\sqrt{n})$ augmenting paths with respect to $M_1\cup M_2$. Each
such path needs to be incident to a different vertex from the separator.

\begin{lemma}
A maximum size matching in planar non-bipartite graph can be computed in $O(\sqrt{n}\log n)$ time on PRAM.
\end{lemma}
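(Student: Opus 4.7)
The plan is to analyze the given algorithm as a balanced recursion on planar separators, driven by an NC augmenting-path primitive, and to show that the parallel time is $O(\sqrt{n}\log n)$.

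First I would pin down the recursion. Using the NC planar-separator algorithm of~\cite{MILLER86}, in $O(\log n)$ parallel time I can find a vertex set $T$ of size $O(\sqrt{n})$ whose removal partitions $V\setminus T$ into two sets $V_1,V_2$ of size at most $2n/3$ such that there is no edge between $V_1$ and $V_2$. Setting $G_1=G[V_1\cup T]$ and $G_2=G[V_2]$, the two recursive calls run in parallel and return maximum matchings $M_1,M_2$ of $G_1,G_2$. Because $G_1$ and $G_2$ share no edge, $M=M_1\cup M_2$ is a legal matching in $G$; the only way it can fail to be maximum is via augmenting paths that genuinely use both parts.

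For correctness of the combining phase, the algorithm attaches a pendant copy $v'$ and an edge $vv'$ for every free vertex $v$, artificially matched to turn $M$ into a perfect matching of the augmented graph. Each outer iteration removes one pendant $v'$, exposing $v$, and invokes the non-factor-critical variant of Algorithm~\ref{algorithm-fix-up} guaranteed by Corollary~\ref{corollary-fix-up} to either (i) return a simple alternating path from $v$ to another free vertex and augment $M$ along it, or (ii) report that no such path exists, in which case $vv'$ is reinstated. A single invocation of the primitive is NC: it only requires the values $w_M(M_v)$ (available in NC by Corollary~\ref{corollary-matching-weights}) and the layered-graph cleanup described in Section~\ref{section-single-path}, so each augmentation attempt costs $O(\log n)$ parallel time.

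The key structural step is bounding the number of \emph{useful} outer iterations by $O(\sqrt{n})$. Let $P$ be any augmenting path in $G$ with respect to $M=M_1\cup M_2$. If $P$ were entirely contained in $V_1\cup T$, it would be an augmenting path with respect to $M_1$ in $G_1$, contradicting maximality of $M_1$; symmetrically for $V_2$. Since every edge between $V_1$ and $V_2$ has an endpoint in $T$, each augmenting path must contain a vertex of $T$, and because the augmenting paths with respect to a fixed matching are internally vertex-disjoint, at most $|T|=O(\sqrt{n})$ of them exist. Thus the combine phase performs at most $O(\sqrt{n})$ successful augmentations. I would implement the outer loop so that only this many iterations actually run---for instance, by restricting it to free vertices that can reach $T$ via an alternating path (detectable in one NC pre-pass using the same layered-graph construction), which bridges the combinatorial ``$O(\sqrt{n})$ augmenting paths'' bound to the running-time bound.

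Putting the pieces together, the combining phase takes $O(\sqrt{n}\log n)$ parallel time, so
$T(n)=T(2n/3+O(\sqrt{n}))+O(\sqrt{n}\log n)$, which solves to $T(n)=O(\sqrt{n}\log n)$ since the work at each level forms a geometric series dominated by the topmost level. The main obstacle in this argument is the last point: the algorithm as written loops over all currently free vertices, and without the ``only augmenting-path endpoints matter'' refinement one could pay $\Theta(|F|\log n)$ per combine step, which may exceed the claimed bound. Showing that this loop can be thinned down to $O(\sqrt{n})$ effective iterations---using the structural fact that every augmenting path hits $T$---is the crucial step that ties the separator size to the wall-clock running time.
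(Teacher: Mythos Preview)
Your approach is the paper's: recurse on the two pieces produced by an $O(\sqrt{n})$ planar separator (with $T$ attached to one side), then repair $M_1\cup M_2$ by sequentially applying augmenting paths found via the NC primitive of Corollary~\ref{corollary-fix-up}. The paper's own proof is two sentences---just your structural observation that every augmenting path for $M_1\cup M_2$ must meet the separator, hence at most $O(\sqrt{n})$ augmentations are needed---so your recurrence analysis and correctness discussion simply fill in what the paper leaves implicit.

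You are right to flag the sequential loop over $F$ as the delicate point; the paper does not address it. However, your proposed fix---restrict the loop to free vertices that can reach $T$ along an alternating path---does not by itself cap the number of iterations at $O(\sqrt{n})$: many free vertices can reach $T$ without any of them being the endpoint of an augmenting path (e.g.\ a matched vertex $t\in T$ adjacent to many unmatched leaves). A cleaner repair is to bound the outer loop by $|T|+1$ rounds: in each round, run the NC alternating-path primitive from \emph{all} current free vertices in parallel, augment along one returned path if any exists, and halt otherwise. Since $|M^{*}|-|M_1\cup M_2|\le |T|$ (edges of $M^{*}$ not in $E(G_1)\cup E(G_2)$ all touch $T$), this terminates in $O(\sqrt{n})$ rounds and feeds your recurrence. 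A further caveat: you assert each call to the primitive costs $O(\log n)$ parallel time, but the paper only claims it is in NC; the stated $O(\sqrt{n}\log n)$ bound is taking that polylog factor as $O(\log n)$ without justification, and your write-up inherits the same looseness.
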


\section{Min-Cost Flow}
\label{appendix-min-cost-flow}
This section presents an algorithm for computing min-cost planar flow. We are given a directed planar
network $N=(V,E)$. The edges have integral capacities given by $c:E\to [1, \poly(n)]$ and integral costs $a:E\to [0, \poly(n)]$.
Moreover, each vertex has integral demand $b:V\to [-\poly(n), \poly(n)]$.

We construct a bipartite graph $G_N$ whose maximum $f$-factor has weight equal to the min-cost of flow in $N$.
The following vertex-splitting construction was first given by \cite{gabow-tarjan-89} (see Figure~\ref{fig:f-factors} c)):
\begin{itemize}
\item for each $v\in V$ place vertices $v_{in}, \, v_{out}$ in $G_N$;
\item for each $v\in V$ add $c(\delta(v))$ copies of edge $v_{in}v_{out}$ to $G_N$;
\item for each $(u,v)\in E$ add  $c(u,v)$ copies of edge $u_{out}v_{in}$ to $G_N$;
\item for $v\in V$, if $b(v)>0$ set $f(v_{in})=c(\delta(v))$ and $f(v_{out})=c(\delta(v))+b(v)$
otherwise set $f(v_{in})=c(\delta(v))-b(v)$ and $f(v_{out})=c(\delta(v))$;
\item for each copy of the edge $u_{out}v_{in}$ set $w(u_{out},v_{in})=a(u,v)$.
\end{itemize}

\begin{corollary}[\cite{gabow-tarjan-89}]
\label{corollary-max-flow}
Let $N$ be the flow network. The weight of the minimum $f$-factor in $G_N$ is equal to the minimum cost of a flow in $N$.
\end{corollary}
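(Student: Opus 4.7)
The plan is to prove the corollary by exhibiting a cost-preserving bijection between integer feasible flows in $N$ and $f$-factors of $G_N$. The key observation is that each vertex $v$'s gadget in $G_N$ consists of internal copies of $v_{in}v_{out}$ together with the external flow edges to neighbors; the $f$-degree constraints at the pair $\{v_{in},v_{out}\}$ will force a single ``slack'' count $\xi(v)$ of used internal copies, and the two resulting degree equations together will exactly reproduce the flow conservation/demand equation at $v$. Since internal copies carry weight $0$ while a copy of $u_{out}v_{in}$ carries weight $a(u,v)$, the total weight of any $f$-factor will equal the total cost of the corresponding flow edge-by-edge, which gives the claimed equality of minima.

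First I would go from a flow to an $f$-factor. Given an integer flow $\varphi : E \to \mathbb{Z}_{\ge 0}$ with $\varphi(u,v) \le c(u,v)$ and $\sum_{w}\varphi(v,w) - \sum_{u}\varphi(u,v) = b(v)$, I build $F$ by selecting $\varphi(u,v)$ of the $c(u,v)$ copies of each $u_{out}v_{in}$, and $\xi(v) := f(v_{out}) - \sum_{w}\varphi(v,w)$ of the $c(\delta(v))$ copies of the internal edge $v_{in}v_{out}$. Using the relation $f(v_{out}) - f(v_{in}) = b(v)$ built into the construction, $\xi(v)$ also equals $f(v_{in}) - \sum_{u}\varphi(u,v)$, so both $f$-degree constraints are met. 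A short case split on the sign of $b(v)$ --- exactly paralleling the piecewise definition of $f$ --- yields $0 \le \xi(v) \le c(\delta(v))$: one needs $\sum_{w}\varphi(v,w) \le c(\delta(v))$ (capacity) and $\sum_{w}\varphi(v,w) \ge \max(b(v),0)$ (nonnegativity of in-flow combined with conservation). The weight of $F$ is then $\sum_{(u,v)} \varphi(u,v)\, a(u,v)$, matching the cost of $\varphi$.

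Conversely, given any $f$-factor $F$ I would set $\varphi(u,v)$ to be the number of copies of $u_{out}v_{in}$ appearing in $F$; the capacity bound is automatic since only $c(u,v)$ copies exist in $G_N$. Letting $\xi(v)$ denote the number of internal copies of $v_{in}v_{out}$ in $F$, the two degree equations $\sum_{u}\varphi(u,v) + \xi(v) = f(v_{in})$ and $\sum_{w}\varphi(v,w) + \xi(v) = f(v_{out})$ subtract to give $\sum_{w}\varphi(v,w) - \sum_{u}\varphi(u,v) = f(v_{out}) - f(v_{in}) = b(v)$, so $\varphi$ is a feasible flow of cost $w(F)$. The two maps are mutual inverses and preserve cost on every edge, so minima agree, which is the statement of the corollary. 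The main --- and essentially the only --- obstacle is the bookkeeping around $\xi(v)$ described above; once one checks that the piecewise definition of $f$ produces $\xi(v) \in [0, c(\delta(v))]$ in both the $b(v) > 0$ and $b(v) \le 0$ branches, the remaining verification is direct.
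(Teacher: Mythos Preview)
Your argument is correct and is precisely the standard verification of the Gabow--Tarjan vertex-splitting reduction. Note, however, that the paper does not give its own proof of this corollary: it is stated with an attribution to~\cite{gabow-tarjan-89} and used as a black box, so there is nothing in the paper to compare your argument against. One small wording point: the two maps are not literally mutual inverses at the level of $f$-factors, since distinct choices of which parallel copies to include yield the same flow $\varphi$; what you actually establish (and all that is needed) is a cost-preserving correspondence between flow values $\varphi$ and multiplicity vectors of $f$-factors, which suffices for equality of minima.
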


Observe that for $G_N$ we have $f(V)\le 4c(E)+b(V)$, i.e., $f(V)$ is polynomial.
We note, that the reduction does not preserve planarity, e.g., consider a vertex od degree $4$ with in, out, in, out edges
in circular order. We can overcome this problem by first reducing the degree of the graph to $3$. This can be done
by replacing a vertex $v$ of degree $d$ by a directed cycle $C_v$ of length $d$ and connecting each edge incident to $v$ to
separate vertex on the cycle -- see Figure~\ref{fig:f-factors}~b). The capacity of the edges on $C_v$ should be $c(\delta(v))$ whereas their cost is $0$. The demand $b(v)$ is
assigned to one of the vertices of the cycle, whereas the other vertices have $0$ demand.
We note that degree $3$ vertices can have only two possible configuration of incident edges, i.e., in, in, out, and in, out, out.
Now in both these cases the above vertex splitting preserves planarity.

\begin{theorem}
Let $N=(V,E)$ be a planar flow network with integral
capacities $c:E\to [1, \poly(n)]$, integral costs $a:E\to [0, \poly(n)]$ and integral demands $b:V\to [-\poly(n), \poly(n)]$.
The minimum cost flow in $N$ can be computed in NC.
\end{theorem}

Given two vertices $s,t\in V$ the min-cost $st$-flow problem of value $f^*$ can be easily computed using the
above theorem. And if we want to find maximum flow of minimum cost, we first need to know the max-flow value
that can be either computed using~\cite{miller-95} or using the binary search.

\section{Maximum Multiple-Source Multiple-Sink Flow}
In this section we modify the idea from the previous section to handle the case when source and sink demands are not fixed but need to be maximized.
We are given a directed planar network $N=(V,E)$ with integral edge capacities given by $c:E\to [1, \poly(n)]$.
Moreover, each vertex has integral demand $b:V\to [-\poly(n), \poly(n)]$. Vertices $v\in V$ such that $b(v)\ge0$ are called
{\em sources} and we require for them $0\le f(v) \le b(v)$. Vertices $v\in V$ such that $b(v)\le 0$ are called {\em sinks}
and we require $b(v)\le f(v) \le 0$. For remaining vertices $v$, i.e., when $b(V)=0$ we need to have $f(v)=0$.
The {\em maximum multiple-source multiple-sink flow} in $N$ is the flow that maximizes value $f(S)$, where $S$ is the set of all sources.

In order to solve this problem we need to combine ideas from the previous section with the ideas used for computing maximum size bipartite matching, i.e.,
we first multiply all the demands by $2$ and then introduce self loops on source and sink vertices.
We construct a bipartite graph $G_N$ where maximum $f$-factor has weight equal to the maximum multiple-source multiple-sink flow in $N$.
\begin{itemize}
\item for each $v\in V$ place vertices $v_{in}, \, v_{out}$ in $G_N$;
\item for each $v\in V$ add $2c(\delta(v))$ copies of edge $v_{in}v_{out}$ to $G_N$;
\item for each $(u,v)\in E$ add  $2c(u,v)$ copies of edge $u_{out}v_{in}$ to $G_N$;
\item for each $v\in V$, if $b(v)\ge 0$ set $f(v_{in})=2c(\delta(v))$, $f(v_{out})=2c(\delta(v))+2b(v)$
and add $b(v)$ self loops to $v_{out}$;
\item for each $v\in V$. if $b(v)<0$ set $f(v_{in})=2c(\delta(v))-2b(v)$, $f(v_{out})=2c(\delta(v))$
and add $b(V)$ self loops to $v_{\in}$;
\item for each copy of the edge $u_{out}v_{in}$ set $w(u_{out},v_{in})=a(u,v)$.
\item set weight $1$ to all edges in $\delta(v_{out})$ for all $v\in S$ and weight $0$
to all other edges.
\end{itemize}

The next corollary is an immediate consequence of the above construction and the integrality of the flow problem.
\begin{corollary}
\label{corollary-max-flow-2}
Let $N$ be the flow network. The weight of the maximum $f$-factor in $G_N$ is equal to the maximum multiple-source multiple-sink flow in $N$.
\end{corollary}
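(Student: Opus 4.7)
The plan is to prove the corollary by establishing a weight-preserving correspondence between integral feasible multi-source multi-sink flows in $N$ and $f$-factors in $G_N$, mirroring the min-cost reduction in Corollary~\ref{corollary-max-flow} but absorbing the flexibility of source/sink flow values through the doubling-plus-self-loops trick used in Algorithm~\ref{algorithm-maximum-bipartite-matching}. The equality of maxima follows once each side dominates the other up to a fixed additive constant that is absorbed in the weight function.

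\textbf{Flow $\to$ $f$-factor.} Given an integral feasible flow $\tilde f$, I would build $F$ by including $2\tilde f(u,v)$ of the $2c(u,v)$ parallel copies of each arc edge $u_{out}v_{in}$, choosing the right number of internal $v_{in}v_{out}$ edges at each vertex to meet the prescribed $f$-degrees, and padding each source $v_{out}$ (resp.\ sink $v_{in}$) with $b(v)-\tilde f(v)$ (resp.\ $\tilde f(v)-b(v)$) self-loops, where $\tilde f(v)$ denotes the net flow leaving $v$. All prescribed $f$-values are even by construction, so the counts balance and $F$ is a valid $f$-factor; its weight tracks the total source outflow $\tilde f(S)$ up to a fixed constant determined purely by the prescribed quantities $f(v_{out})$ and $b(v)$.

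\textbf{$f$-factor $\to$ flow.} This is the main technical direction. Given an $f$-factor $F$, I would first strip away its self-loops to obtain a bipartite multigraph $F'$ on the bipartition $\{v_{in}\}\cup\{v_{out}\}$ in which every vertex has even degree (each prescribed $f$-value is even, and each removed self-loop drops its vertex's degree by $2$). By bipartiteness plus the Eulerian property, $F'$ decomposes into edge-disjoint closed trails of even length. The key step is converting these trails into an integer flow that respects individual arc capacities; I would do this by decomposing $F'$ into elementary directed cycles of $N$ one at a time, via the standard flow-decomposition peeling argument, so that each elementary cycle consumes at most one copy of any arc per cycle. Setting $\tilde f(u,v)$ equal to half the signed count of $u_{out}v_{in}$ copies traversed in the $+$ versus $-$ direction across these cycles yields an integer flow: conservation at interior vertices follows from the halved equal degrees at $v_{in}/v_{out}$; the source/sink net flow lies in $[0,b(v)]$ because $\deg_{F'}(v_{out})-\deg_{F'}(v_{in})=2(b(v)-\ell(v))$ with $\ell(v)\in[0,b(v)]$ the number of source self-loops used; and arc capacities are respected because $F'$ uses at most $2c(u,v)$ copies of each arc.

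The main obstacle, in analogy with Algorithm~\ref{algorithm-maximum-bipartite-matching}, is ensuring that halving the $f$-factor does not overshoot any individual arc capacity: a careless Eulerian orientation could route all $2c(u,v)$ copies of a single arc in one direction and push $\tilde f(u,v)$ past $c(u,v)$. The elementary-cycle peeling described above handles this cleanly, since each elementary cycle deposits at most unit flow on any given arc, and the total number of cycles traversing an arc is bounded by half its $F'$-multiplicity, which is at most $c(u,v)$. Matching weights on both sides of the correspondence then yields the claimed equality.
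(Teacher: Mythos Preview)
The paper gives essentially no proof here: it declares the corollary ``an immediate consequence of the above construction and the integrality of the flow problem.'' The intended argument is the obvious one: given an $f$-factor $F$, set $\tilde f(u,v)$ to be \emph{half the number of $u_{out}v_{in}$ copies in $F$}. Since at most $2c(u,v)$ copies are available, $0\le\tilde f(u,v)\le c(u,v)$ holds automatically; the degree equations at $v_{in},v_{out}$ give conservation at interior vertices and the correct source/sink bounds; and the flow value matches the weight up to the constant you mention. This $\tilde f$ may be half-integral, but the max-flow polytope is integral, so an integral flow of the same value exists. That is the whole proof.

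Your $f$-factor $\to$ flow direction takes a detour that creates an artificial obstacle and then does not cleanly resolve it. Two concrete issues. First, the ``main obstacle'' you flag --- overshooting an arc capacity --- does not arise if you simply halve the multiplicity as above; it only appears because you route through an Eulerian orientation and then count forward traversals, which can indeed concentrate all $2c(u,v)$ copies in one direction. The real obstacle is integrality, not capacity, and that is exactly what flow-polytope integrality handles. Second, your proposed fix is not well-defined: ``half the signed count'' $(k^+-k^-)/2$ need not be an integer (its parity equals that of $k^+ + k^- = k(u,v)$, which can be odd) and can be negative; and cycles of $F'$ in $G_N$ correspond to closed walks in $N$ that may traverse arcs backward, not to elementary directed cycles of $N$, so ``peeling elementary directed cycles of $N$'' from $F'$ is not a valid decomposition. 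The assertion that the number of peeled cycles through an arc is at most half its $F'$-multiplicity is likewise unsupported.

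Your flow $\to$ $f$-factor direction is fine and matches the intended construction. Replace the reverse direction with the one-line halving-plus-integrality argument and the proof is complete.
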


Observe that for $G_N$ we have $f(V)\le 8c(E)+2b(V)$. Again, we note, that the reduction does not preserve planarity. Hence,
we first need to reduce the degree of the graph to $3$ as described in the previous section.

\begin{theorem}
Let $N=(V,E)$ be a planar flow network with integral
capacities $c:E\to [1, \poly(n)]$, integral costs $a:E\to [0, \poly(n)]$ and integral demands $b:V\to [-\poly(n), \poly(n)]$.
The minimum cost flow in $N$ can be computed in NC.
\end{theorem}

\newpage
\begin{figure}[p]
\centering
    \includegraphics[width=0.7\textwidth]{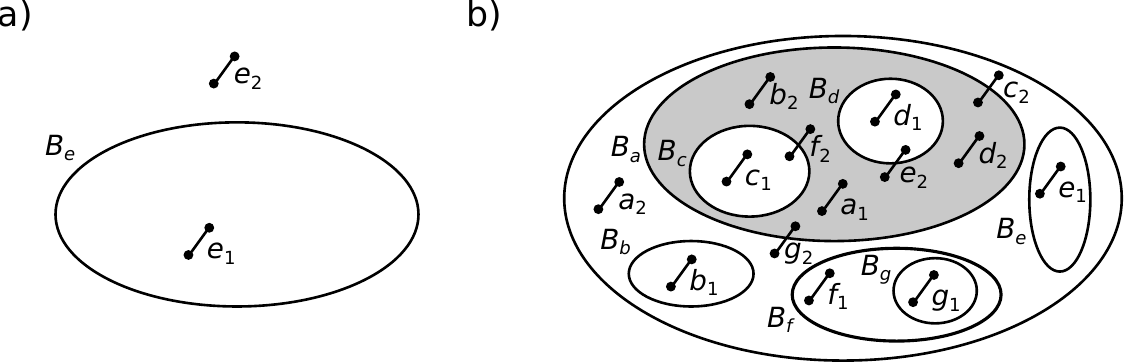}
    \caption{For each blossom $B_e$ there needs to be a pair of edges: $e_1$ inside and edge $e_2$ outside of $B_e$ -- Figure a).
    Figure b): when we recurse from Algorithm~\ref{algorithm-matching-recurse} to Algorithm~\ref{algorithm-degree-reduction} the blossom tree is empty, and we
    are recursing on one of the regions into which blossoms divide the plane -- marked with gray. Observe that at least one edge
    from each pair is not incident to this region.
    \label{fig:blossoms}}
\end{figure}

\begin{figure}[p]
\centering
    \includegraphics[width=0.5\textwidth]{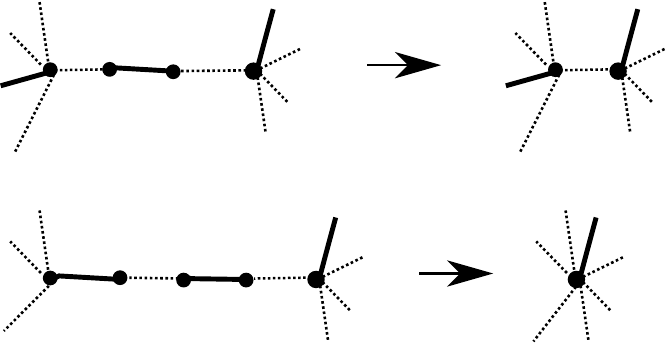}
    \caption{
    Figure a) shows manipulation of a path composed out of degree $2$ vertices by Algorithm~\ref{algorithm-degree-reduction}.
    Afterwards, degree $2$ vertices are independent. Vertices of degree $2$ are marked white and matched edges are drawn with solid lines.
    \label{fig:degree-2}}
\end{figure}

\begin{figure}[p]
\centering
    \includegraphics[width=0.99\textwidth]{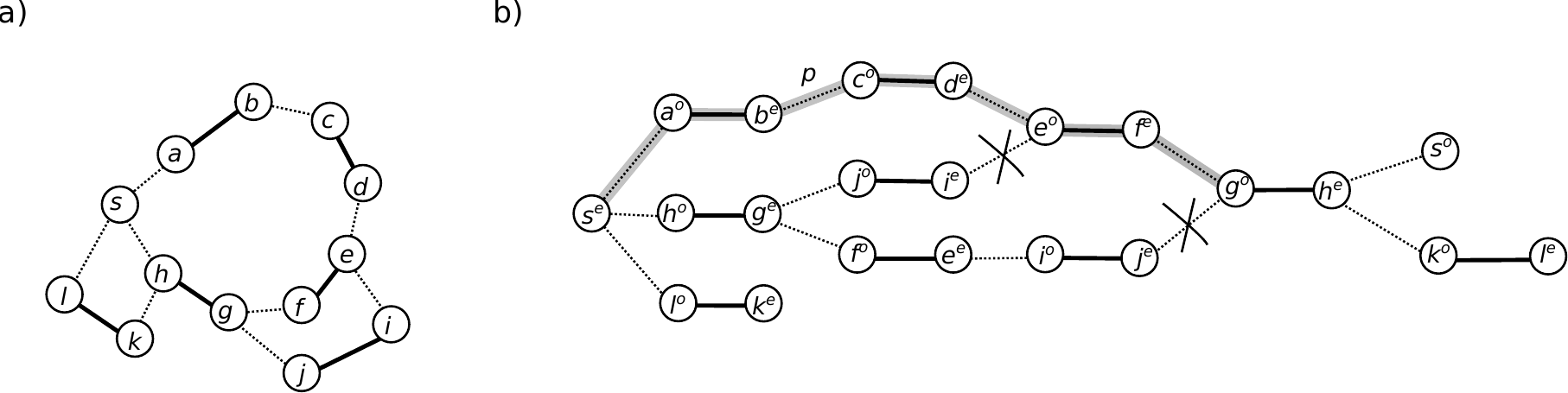}
    \caption{Figure a) shows a factor-critical graph, matched edges are marked with solid lines and $s$ is the free vertex.
    Figure b) presents layered graph $G_L$ from Algorithm~\ref{algorithm-fix-up}. There are two non-simple paths represented
    by this graph that go from $g^e$ to $g^o$. These paths are destroyed by the algorithm as $g^o$ is reachable via
    a simple path $p$ marked with grey.
    \label{fig:single-path}}
\end{figure}
\begin{figure}[p]
\centering
    \includegraphics[width=0.99\textwidth]{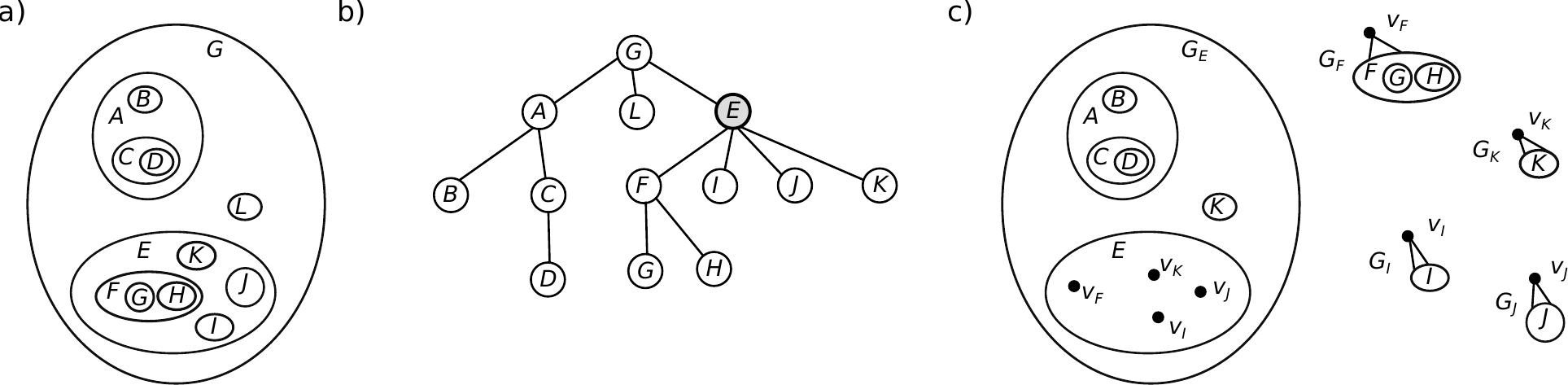}
    \caption{Figure a) shows a laminar family of sets. Figure b) gives the same family using laminar tree. Vertex separator -- node $E$ is marked with grey.
     Figure c) shows the graphs on which we recurse in Algorithm~\ref{algorithm-matching-recurse} using the marked separator $E$.\label{fig:recurse-tree}}
\end{figure}
\begin{figure}[p]
\centering
    \includegraphics[width=0.99\textwidth]{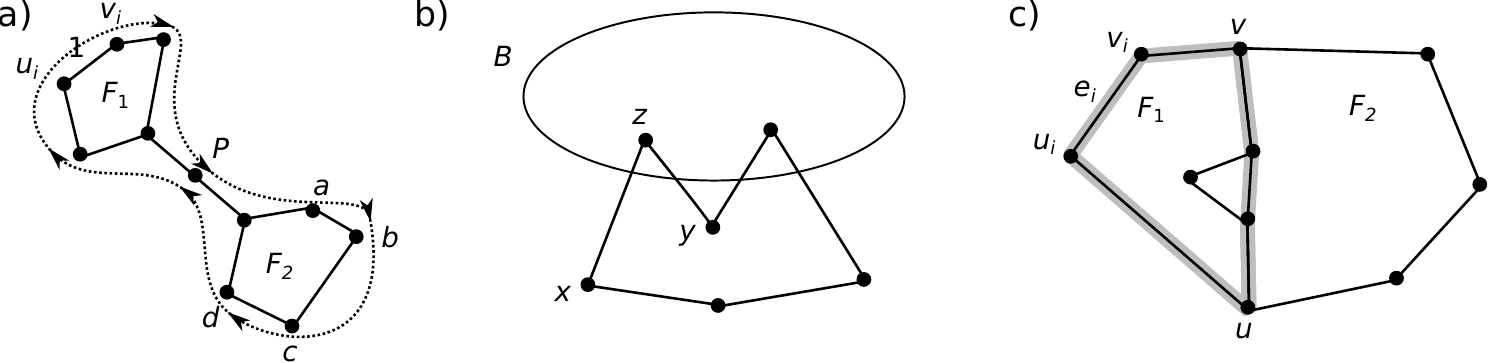}
    \caption{Figure a) shows a non-simple even length cycle composed out of two faces $F_1, F_2$ and a path $P$. We have $\pi_a+\pi_b=0$, $\pi_b+\pi_c=0$ and $\pi_c+\pi_d=0$, what
    gives $\pi_a-\pi_c=0$ and $\pi_a+\pi_d=0$. In Figure b) even length cycle does not contain an edge inside a blossom. In such a case $\pi_z+\pi_x+\pi_B=0$ and
    $\pi_z+\pi_y+\pi_B=0$, so $\pi_x-\pi_y=0$. Figure c) shows a case when two faces $F_1$ and $F_2$ share some edges. In such a case there always exists an even length cycle -- marked with gray.
    \label{fig:double-faces}}
\end{figure}

\begin{figure}[p]
\centering
    \includegraphics[width=0.99\textwidth]{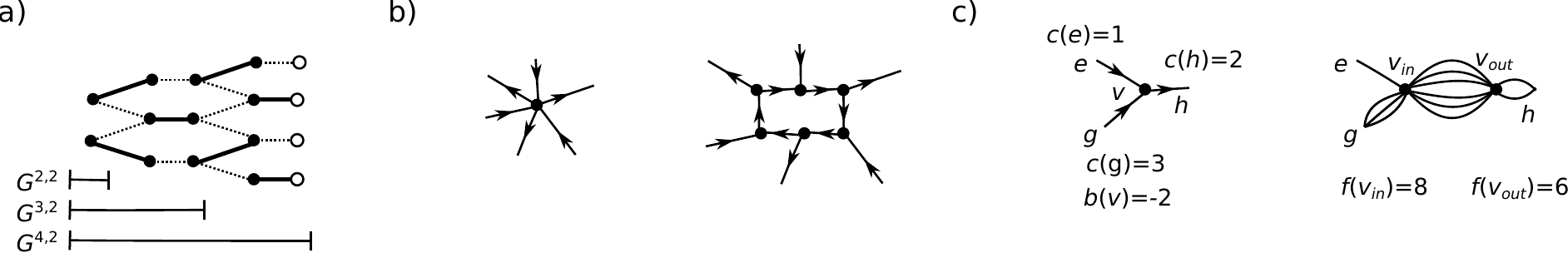}
    \caption{Figure a) shows the inductive construction of gadget $G^{4,2}$. The interface vertices are marked with white.
    Observe that exactly two of them can be made free. Figure b) presents a replacement of a vertex with a directed cycle.
    Figure c) presents the standard reduction of the flow problem to $f$-factor problem, when the degree of the vertex is $3$. The
    resulting graph is planar.
    \label{fig:f-factors}}
\end{figure}

\end{document}